\documentclass[journal]{IEEEtran}

\usepackage{cite}
\usepackage{amsmath,amssymb,amsfonts}
\usepackage{graphicx,color}
\usepackage{textcomp}
\usepackage{url}
\usepackage{algorithm}
\usepackage{algpseudocode}
\usepackage{subfig}  

\DeclareMathOperator{\tr}{tr}
\DeclareMathOperator{\diag}{diag}
\DeclareMathOperator{\Reop}{Re}

\newcommand{\CN}{\mathcal{CN}}

\newtheorem{assumption}{Assumption}
\newtheorem{remark}{Remark}
\newtheorem{proposition}{Proposition}
\newtheorem{definition}{Definition}

\begin{document}

\title{Wideband Near-Field Channel Estimation Under Hybrid Compression:
Cross-Subcarrier KL Covariance Fitting With OFDM Fresnel Model}

\author{R{\i}fat Volkan \c{S}enyuva%
\thanks{The author is with the Department of
Electrical-Electronics Engineering, Maltepe University,
Istanbul 34857, Turkey
(e-mail: rifatvolkansenyuva@maltepe.edu.tr).}%
\thanks{The author declares no funding for this work.}%
\thanks{This work has been submitted to the IEEE for possible publication.
Copyright may be transferred without notice, after which this version may
no longer be accessible.}}

\markboth{Wideband Near-Field Channel Estimation Under Hybrid Compression}%
{\c{S}enyuva}

\maketitle

\begin{abstract}
We consider wideband channel estimation for extremely large-scale
multiple-input multiple-output (XL-MIMO) arrays under hybrid
analog-digital compression, in which a uniform linear array (ULA) is
observed through far fewer radio-frequency (RF) chains than antennas.
At a carrier frequency of $28$~GHz with bandwidths reaching several
hundred MHz, the standard narrowband polar-domain channel model fails:
the near-field Fresnel curvature becomes subcarrier-dependent, and the
compressed observation destroys the per-subcarrier spatial covariance
structure that narrowband methods exploit.
We propose the Wideband Cross-subcarrier Kullback--Leibler (WB-CL-KL)
estimator, which jointly estimates angle and range directly from the
compressed sample covariance, without full-array reconstruction, by
fitting a structured Fresnel covariance model across orthogonal
frequency-division multiplexing (OFDM) subcarriers via a
cross-subcarrier Kullback--Leibler (KL) divergence criterion.
We also derive the wideband compressed-domain Cram\'{e}r--Rao bound
(CRB)---the performance lower bound for this hybrid architecture---from
the Slepian--Bangs formula, and decompose its gain over the narrowband
bound into a data-diversity component of $+27.093$~dB and a
geometric-diversity component of $+0.701$~dB, totalling $+27.793$~dB at
$B = 400$~MHz (Propositions~1 and~2).
In the single-path line-of-sight regime, WB-CL-KL attains a range
root-mean-square error of $19.8$~mm against a $19.9$~mm bound at
signal-to-noise ratio (SNR)~$= 10$~dB, a ratio of $0.996$.
Under the 3GPP Urban Micro (UMi) path-loss and shadow-fading
SNR distribution, it achieves a bound ratio of
$0.959$ at the median deployment SNR of $9.6$~dB, indicating near-CRB
operation at the representative deployment point, where the
compressed-domain bound is evaluated at the scene-median geometry.
\end{abstract}

\begin{IEEEkeywords}
channel estimation, Cram\'{e}r--Rao bound, hybrid analog-digital architecture,
near-field communications, OFDM, wideband XL-MIMO
\end{IEEEkeywords}

\section{Introduction}
\label{sec:intro}
\IEEEPARstart{W}{e} consider the problem of wideband channel estimation for
near-field extremely large aperture MIMO (XL-MIMO) systems under hybrid
analog-digital compression.
Extremely large antenna arrays operating in sub-THz and upper millimeter-wave
(mmWave) bands push the communication range into the Fresnel regime, where
spherical wavefronts couple the angle and range of each scatterer and the
conventional plane-wave approximation fails~\cite{liu2023nftutorial,ye2024elaa}.
Wideband orthogonal frequency-division multiplexing (OFDM) signaling with
bandwidths reaching several hundred MHz at a carrier frequency $f_c = 28$~GHz
introduces a frequency-dependent Fresnel curvature: the quadratic phase
curvature across the antenna aperture varies with subcarrier index, invalidating
narrowband polar-domain sparsity models that underpin compressed sensing
approaches designed for a single frequency~\cite{parvini2025wbxlmimo}.
Hybrid analog-digital architectures, in which $N_\mathrm{RF} \ll M$ radio
frequency (RF) chains compress the $M$-antenna observation through a combining
matrix $\mathbf{W}$, introduce a further information loss that is precisely
quantified by the compressed-domain Cram\'{e}r--Rao bound (CRB).
The triple intersection of wideband OFDM, near-field Fresnel geometry, and hybrid
compression defines the estimation problem that this paper addresses, and no
unified treatment of all three constraints appears in the existing literature.

The polar-domain simultaneous orthogonal matching pursuit (SOMP)
estimator of~\cite{cui2022polar}, the
dictionary learning orthogonal matching pursuit (DL-OMP) of~\cite{zhang2024dlomp}, and the bilinear
pattern detection (BPD) algorithm of~\cite{cuidai2023sciencechina} capture
near-field spherical-wavefront geometry but assume narrowband signaling
and full-array ($N_\mathrm{RF} = M$) access, leaving wideband curvature and
compression gaps unaddressed.
Recent wideband near-field estimators address the frequency-dependent
curvature through structured sparse recovery---block- and
cross-subcarrier-sparse Bayesian learning~\cite{chen2026blocksparse},
parametric-symmetry decoupling under beam squint~\cite{lu2026distributed},
and array-perturbation beam-split estimation with Cram\'{e}r--Rao
analysis~\cite{elbir2023beamsplit}---but operate on the array-domain or
per-subcarrier sparse signal rather than the compressed sample covariance,
and none fits a shared cross-subcarrier covariance model under hybrid
compression.
The dual-wideband XL-MIMO estimator of Tang \emph{et al.}~\cite{tang2025dualwb}
jointly models beam-split and spatial non-stationarity through a
message-passing recovery on the full array, but neither operates on a
compressed sample covariance nor derives a compressed-domain bound.
Delay-domain and angle-domain sparse recovery methods for wideband OFDM
handle frequency-selective channels but adopt far-field steering vectors and
therefore cannot estimate range.
The wideband near-field beamforming design of Wang \emph{et al.}~\cite{wang2026icassp}
employs a time-delay-and-sum hybrid architecture under an OFDM Fresnel signal
model closely related to that of this paper, but assumes the channel is known
and solves a beamformer design problem; no channel estimation or CRB analysis
is provided.
The narrowband predecessor of the present work~\cite{senyuva2026clkl} introduces
covariance-domain KL fitting for near-field hybrid-compressed channel estimation,
but is limited to a single carrier frequency and does not derive a
compressed-domain CRB.
The wideband near-field CRBs of Wei \emph{et al.}~\cite{wei2025wbcrb} and Wang
\emph{et al.}~\cite{wang2025twc} quantify fundamental limits for full-array receivers;
neither addresses the information loss due to hybrid compression or provides an
estimator achieving those bounds; the near-field hybrid CRB
of~\cite{thallapalli2026crlb} incorporates time-delay hybrid architectures
under a Fresnel approximation but considers a uniform circular array in a
high signal-to-noise ratio (SNR) regime and derives no estimator.
The Fisher information framework for near-field localization established
in~\cite{wymeersch2020icc} provides the Fisher information matrix (FIM)
decomposition foundations that this
paper extends to the wideband compressed-domain regime.
The narrowband far-field counterpart---covariance-guided DFT beam selection for
beamspace ESPRIT in hybrid arrays---is developed in~\cite{senyuva2026sensors};
the present paper extends the covariance-fitting principle to the near-field
Fresnel regime under wideband OFDM signaling and hybrid compression.
The KL covariance-fitting lineage originates in the maximum-likelihood structured
covariance framework of~\cite{pote2023tsp,stoica1990perf}; neither reference
addresses spherical-wave geometry or hybrid combining.
Standard wideband multipath channel generation follows the 3GPP TR~38.901
cluster-delay-line model~\cite{tr38901_v16}; an accessible MATLAB tutorial
implementation is provided in~\cite{riviello2022}.
No existing work addresses simultaneous wideband OFDM signaling, near-field
Fresnel geometry, hybrid analog-digital compression, and compressed-domain CRB
analysis in a unified framework.
Table~\ref{tab:priorart} contrasts the proposed method with the most closely
related prior art across these axes.

\begin{table*}[t]
\centering
\caption{Positioning of the proposed WB-CL-KL estimator against the most closely
related prior art. A check mark ($\checkmark$) indicates the property is addressed.
WB: wideband OFDM; NF: near-field Fresnel geometry; Hybrid: hybrid analog-digital
compression ($N_\mathrm{RF}\ll M$); Est.: provides a channel estimator; C-CRB:
compressed-domain CRB; Cov.: covariance-domain fitting (vs.\ sparse recovery).
The CRBs of~\cite{wei2025wbcrb,wang2025twc} are derived for full-array receivers.}
\label{tab:priorart}
\small
\begin{tabular}{l c c c c c c}
\hline\hline
Method & WB & NF & Hybrid & Est. & C-CRB & Cov. \\
\hline
Cui \& Dai~\cite{cui2022polar}                 &            & \checkmark &            & \checkmark &            &            \\
Cui \& Dai~\cite{cuidai2023sciencechina}       & \checkmark & \checkmark &            & \checkmark &            &            \\
Chen \emph{et al.}~\cite{chen2026blocksparse}  & \checkmark & \checkmark &            & \checkmark &            &            \\
Zhang \emph{et al.}~\cite{zhang2024dlomp}      & \checkmark & \checkmark &            & \checkmark &            &            \\
Tang \emph{et al.}~\cite{tang2025dualwb}       & \checkmark & \checkmark &            & \checkmark &            &            \\
Elbir \emph{et al.}~\cite{elbir2023beamsplit}  & \checkmark & \checkmark & \checkmark & \checkmark & \checkmark &            \\
Lu \emph{et al.}~\cite{lu2026distributed}      & \checkmark & \checkmark & \checkmark & \checkmark &            &            \\
Thallapalli \emph{et al.}~\cite{thallapalli2026crlb} & \checkmark & \checkmark & \checkmark &    & \checkmark &            \\
Wei/Wang~\cite{wei2025wbcrb,wang2025twc}       & \checkmark & \checkmark &            &            &            &            \\
Pote \& Rao~\cite{pote2023tsp}                 &            &            &            & \checkmark &            & \checkmark \\
\hline
\textbf{Proposed (WB-CL-KL)}                   & \checkmark & \checkmark & \checkmark & \checkmark & \checkmark & \checkmark \\
\hline\hline
\end{tabular}
\end{table*}

The contributions of this paper are as follows.
First, we propose a wideband cross-subcarrier KL covariance fitting estimator
(WB-CL-KL) that jointly estimates angle and range under the near-field OFDM
Fresnel model with hybrid analog-digital compression.
The estimator exploits cross-subcarrier covariance structure to resolve the
frequency-dependent curvature ambiguity that defeats narrowband polar-domain
methods, and employs SNR-adaptive diagonal loading to stabilize the KL objective
across operating conditions.
Second, we derive the wideband compressed-domain CRB under hybrid combining and
decompose the total CRB gain into a data diversity component, scaling as
$10\log_{10}(K_s)$~dB with the number of subcarriers $K_s$, and a geometric
diversity component arising from frequency-dependent Fresnel curvature.
At $B = 400$~MHz, the total CRB gain relative to the narrowband limit reaches
$+27.793$~dB, with data diversity contributing $+27.093$~dB and geometric
diversity contributing $+0.701$~dB.
Third, we validate WB-CL-KL against the compressed-domain CRB through Monte Carlo
simulation, demonstrating root-mean-square error RMSE$_r = 0.0198$~m versus CRB$_r = 0.0199$~m
(ratio $0.996$) at SNR$\,=10$~dB and normalized mean-squared error NMSE$_r = -43.16$~dB at $B = 400$~MHz,
demonstrating near-CRB operation in the strong near-field regime, where
the compressed-domain bound is evaluated at the scene-median geometry.
A companion conference paper~\cite{senyuva2026globecom} presented the CRB
derivation and diversity decomposition; the present paper extends that work with
the WB-CL-KL estimator, full Monte Carlo validation, and estimator-CRB gap
analysis.

The remainder of this paper is organized as follows.
Section~\ref{sec:model} presents the signal model and system model.
Section~\ref{sec:method} develops the WB-CL-KL estimator.
Section~\ref{sec:crb} derives the wideband compressed-domain CRB.
Section~\ref{sec:results_est} evaluates estimator performance via Monte Carlo
simulation, and Section~\ref{sec:results_crb} presents the CRB and diversity
decomposition results.
Section~\ref{sec:conclusion} concludes.
Throughout this paper, $\mathbf{A}$ denotes a matrix, $\mathbf{a}$ a column
vector, $(\cdot)^T$, $(\cdot)^H$, $(\cdot)^{-1}$, and $(\cdot)^\dagger$ denote
transpose, Hermitian transpose, inverse, and pseudoinverse, respectively.
The expectation operator is written $\mathbb{E}[\cdot]$, and $\|\cdot\|_F$ denotes the Frobenius
norm.

\section{Signal Model and System Model}
\label{sec:model}

\subsection{Near-Field XL-ULA Geometry}
\label{sec:model:geo}

We consider an $M$-element uniform linear array (ULA) with half-wavelength
inter-element spacing $d_\mathrm{ant} = \lambda_c/2$, where
$\lambda_c = c/f_c$ is the carrier wavelength.
Let $\bar{m} = m - (M-1)/2$ denote the centered element index,
$m = 0,\ldots,M-1$.
Under the Fresnel (uniform spherical-wave, USW) approximation, the
$m$th entry of the narrowband near-field steering vector for a scatterer
at angle $\theta$ and range $r$ is
\begin{equation}
  [\mathbf{a}(\theta,r)]_m =
    \exp\!\big(j\omega(\theta)\,\bar{m}
             - j\kappa(\theta,r)\,\bar{m}^2\big),
  \label{eq:a_nb}
\end{equation}
where the linear and quadratic phase coefficients are defined as
\begin{align}
  \omega(\theta)   &= -\frac{2\pi d_\mathrm{ant}}{\lambda_c}\cos\theta,
  \label{eq:omega} \\
  \kappa(\theta,r) &= \phantom{-}\frac{\pi d_\mathrm{ant}^2}{\lambda_c}
                      \frac{\sin^2\!\theta}{r}.
  \label{eq:kappa}
\end{align}
We use $u = 1/r$ as the range variable throughout, since the curvature
coefficient $\kappa = (\pi d_\mathrm{ant}^2/\lambda_c)\sin^2\!\theta \cdot u$
is linear in $u$, which simplifies the gradient expressions in
Section~\ref{sec:method}.
The Fresnel approximation is valid when $r$ lies within the Rayleigh
distance $r_\mathrm{RD} = 2D_\mathrm{ap}^2/\lambda_c$, where
$D_\mathrm{ap} = (M-1)d_\mathrm{ant}$ is the array aperture.

Assumption~\ref{asm:pilot} below requires cross-subcarrier independence
of the path-gain realisations, which corresponds to the standard OFDM
pilot-design contract.
This is consistent with the cross-frequency geometry consistency rules
of 3GPP TR~38.901 Sec.~7.6.5~\cite{tr38901_v16}, which mandate that
cluster geometry is shared across frequency bands while per-cluster
shadowing and small-scale phases are independently regenerated.

\subsection{OFDM Wideband Model}
\label{sec:model:ofdm}

The base station (BS) receives an OFDM waveform with $K$ subcarriers spaced by
$\Delta f$, yielding total bandwidth $B = K\Delta f$.
The $k$th subcarrier frequency is $f_k = f_c + (k - K/2)\Delta f$,
$k = 1,\ldots,K$, and we define the frequency ratio
\begin{equation}
  \alpha_k = \frac{f_k}{f_c}
    = 1 + \frac{(k-K/2)\Delta f}{f_c}.
  \label{eq:alpha_k}
\end{equation}
The ratio $\alpha_k$ scales both the linear and quadratic phase of the
steering vector, coupling angle and range to subcarrier frequency.
For the $\ell$th propagation path, the frequency-scaled Fresnel steering
vector at subcarrier $k$ is
\begin{equation}
  [\mathbf{a}_{\ell,k}]_m =
    \exp\!\big(j\,\alpha_k\,\omega_\ell\,\bar{m}
             - j\,\alpha_k\,\kappa_\ell\,\bar{m}^2\big),
  \label{eq:a_k}
\end{equation}
where $\omega_\ell = \omega(\theta_\ell)$ and
$\kappa_\ell = \kappa(\theta_\ell, r_\ell)$.
As $\alpha_k$ departs from unity, the beam peak and Fresnel curvature
shift jointly, a phenomenon absent in narrowband models.
The wideband near-field OFDM Fresnel signal model in~\eqref{eq:a_k}
is the closest published signal-model twin of the time-delay-and-sum
(TTD) hybrid architecture in~\cite{wang2026icassp}; our
fractional bandwidth $B/f_c \approx 1.4\%$ is substantially narrower
than the ${\sim}10\%$ considered there, so frequency-flat phase-shifter
combining remains a valid approximation.

The per-subcarrier element-space observation at snapshot $n$ is
\begin{equation}
  \mathbf{x}_k(n) = \sum_{\ell=1}^{d}
    s_{\ell,k}(n)\,\mathbf{a}_{\ell,k} + \mathbf{w}_k(n),
  \label{eq:x_k}
\end{equation}
where $s_{\ell,k}(n)$ is the complex path gain at subcarrier $k$,
snapshot $n$, and $\mathbf{w}_k(n) \in \mathbb{C}^M$ is additive noise.

\subsection{Hybrid Analog-Digital Compression}
\label{sec:model:hybrid}

A hybrid analog-digital architecture with $N_\mathrm{RF} \ll M$ RF
chains applies a frequency-flat combining matrix
$\mathbf{W} \in \mathbb{C}^{M \times N_\mathrm{RF}}$ to compress the
$M$-dimensional observation before digital processing.
The compressed observation at subcarrier $k$, snapshot $n$ is
\begin{equation}
  \mathbf{y}_k(n) = \mathbf{W}^H\mathbf{x}_k(n)
    = \sum_{\ell=1}^{d} s_{\ell,k}(n)\,\mathbf{d}_{\ell,k}
      + \mathbf{W}^H\mathbf{w}_k(n),
  \label{eq:y_k}
\end{equation}
where $\mathbf{d}_{\ell,k} = \mathbf{W}^H\mathbf{a}_{\ell,k}
\in \mathbb{C}^{N_\mathrm{RF}}$ is the compressed steering vector.
The combiner $\mathbf{W}$ is frequency-flat across subcarriers (phase-shifter
architecture), so the same compression is applied at every $k$;
frequency-dependent combiners such as TTD architectures are an
open extension discussed in Section~\ref{sec:conclusion}.
The entries of $\mathbf{W}$ satisfy
$|[\mathbf{W}]_{m,j}| = 1/\sqrt{M}$ (constant-modulus constraint);
in simulations $\mathbf{W}$ is drawn uniformly at random from the
set of constant-modulus matrices and its statistics are
characterized in Section~\ref{sec:results_crb}.

\subsection{Compressed-Domain Covariance and Wideband Mismatch}
\label{sec:model:cov}

In this subsection, we derive the compressed-domain covariance model
and quantify the wideband mismatch that motivates the cross-subcarrier
estimator of Section~\ref{sec:method}.

Under Assumption~\ref{asm:pilot} (stated immediately below), the
compressed-domain covariance at subcarrier $k$ is
\begin{equation}
  \mathbf{R}_{y,k}(\boldsymbol{\eta}) =
    \sum_{\ell=1}^{d} p_\ell\,\mathbf{d}_{\ell,k}\mathbf{d}_{\ell,k}^H
    + N_0\,\mathbf{W}^H\mathbf{W},
  \label{eq:Ry_k}
\end{equation}
where $\boldsymbol{\eta} = [\omega_1,\ldots,\omega_d,\,
\kappa_1,\ldots,\kappa_d,\,p_1,\ldots,p_d,\,N_0]^T
\in \mathbb{R}^{3d+1}$
is the shared parameter vector, $p_\ell$ are the path powers, and
$N_0$ is the noise power.
The parameters $\boldsymbol{\eta}$ are shared across all subcarriers;
only the $\alpha_k$-scaling of the steering vector changes with $k$.
The sample covariance is estimated from $N$ snapshots as
\begin{equation}
  \widehat{\mathbf{R}}_{y,k} =
    \frac{1}{N}\sum_{n=1}^{N}\mathbf{y}_k(n)\mathbf{y}_k(n)^H.
  \label{eq:Rhat_k}
\end{equation}

An estimator designed at the center frequency $f_c$ (i.e., using
$\mathbf{R}_{y,k_c}$ for all $k$) suffers a model-mismatch bias that
grows with bandwidth.
The Frobenius-norm relative mismatch
$\|\mathbf{R}_{y,k}-\mathbf{R}_{y,k_c}\|_F / \|\mathbf{R}_{y,k_c}\|_F$
at the edge subcarrier reaches 63.65\% at $B = 100$~MHz, 176.59\% at
$B = 400$~MHz, and 194.18\% at $B = 800$~MHz
($f_c = 28$~GHz, $M = 64$, $r \in [1.06, 4.25]$~m), motivating the
wideband treatment in Sections~\ref{sec:method}--\ref{sec:crb}.
In practice, $K_s \le K$ uniformly-spaced subcarriers suffice for
both CRB evaluation and the WB-CL-KL estimator; the selection
criterion is given in Section~\ref{sec:method}.

%

\begin{assumption}[Stochastic OFDM pilot signal model]
\label{asm:pilot}
For all path indices $\ell \in \{1,\ldots,d\}$, subcarrier indices
$k \in \{1,\ldots,K\}$, and snapshot indices $n \in \{1,\ldots,N\}$:
(A1) the path gains satisfy $s_{\ell,k}(n) \sim \CN(0, p_\ell)$ with
equal powers $p_\ell = 1/d$;
(A2) the collection $\{s_{\ell,k}(n)\}_{\ell,k,n}$ is mutually independent
across all three indices, that is, $s_{\ell,k}(n)$ and $s_{\ell',k'}(n')$
are independent whenever $(\ell,k,n) \neq (\ell',k',n')$;
(A3) the receiver noise satisfies
$\mathbf{w}_k(n) \sim \CN(\mathbf{0}, N_0 \mathbf{I}_M)$, mutually
independent across $(k,n)$ and independent of
$\{s_{\ell,k}(n)\}_{\ell,k,n}$.
The Cram\'{e}r--Rao analysis in Section~\ref{sec:crb} adopts the
stochastic (unconditional) form of this model; the choice over the
conditional alternative is discussed in Remark~\ref{rem:c8}.
\end{assumption}

\begin{remark}[Physical interpretation, and role]
\label{rem:pilot}
Clauses (A1)--(A3) encode the standard OFDM pilot-design contract:
orthogonal pilot symbols probe an uncorrelated multipath channel
across subcarriers and snapshots. Cross-subcarrier independence
in (A2) is the pivotal clause for the wideband Fisher information
decomposition: it is precisely the condition under which the joint
log-likelihood factorises over $k$ and the wideband FIM reduces to
$\mathbf{J}_{\mathrm{WB}} = \sum_{k=1}^{K_s} \mathbf{J}_k$, on which
the $10\log_{10} K_s$ data-diversity term in Proposition~\ref{prop:data_div}
directly depends. In the narrowband specialization $K = 1$, the
cross-$k$ clauses become vacuous and Assumption~\ref{asm:pilot}
reduces to the temporal i.i.d.\ pilot model of~\cite{senyuva2026clkl}. 
Partial violation of (A2) along the $k$ dimension (a correlation envelope 
of effective rank $K_{\mathrm{eff}} \le K_s$) rescales the diversity factor to
$10\log_{10} K_{\mathrm{eff}}$ rather than invalidating the framework,
so the model degrades gracefully. Inter-path independence in (A2)
aligns with line-of-sight and weakly clustered near-field scenarios;
correlated multipath is treated as a generalisation in
Section~\ref{sec:multipath}.
\end{remark}


\begin{remark}[Stochastic versus conditional model]
\label{rem:c8}
Integrating out the path gains under Assumption~\ref{asm:pilot} gives
$\mathbf{y}_k(n)\sim\CN(\mathbf{0},\mathbf{R}_{y,k}(\boldsymbol{\eta}))$,
so the per-subcarrier negative log-likelihood reduces (up to constants)
to the Kullback--Leibler divergence $\log\det\mathbf{R}_{y,k}+
\mathrm{tr}(\mathbf{R}_{y,k}^{-1}\widehat{\mathbf{R}}_{y,k})$; the
wideband objective of Section~\ref{sec:method} is therefore the
stochastic maximum-likelihood
estimator~\cite{stoica1990perf,pote2023tsp}, and joint independence
under~(A1)--(A3) yields the additive Slepian--Bangs decomposition
$\mathbf{J}_{\mathrm{WB}}=\sum_{k=1}^{K_s}\mathbf{J}_k$ that
underlies Proposition~\ref{prop:data_div}. The conditional
alternative of~\cite{stoica1990perf} promotes the $2dK_sN$ gain
realisations to unknowns, replaces this FIM with a data-dependent
block-coupled form that does not factorise over~$k$, and is
asymptotically no tighter for Gaussian gains.
\end{remark}


\begin{remark}[Element-amplitude approximation]
\label{rem:c3}
The phase-only Fresnel/USW steering vector~\eqref{eq:a_k} drops the
per-element amplitude factor $r_\ell/r_{\ell,m}$ that the exact
spherical-wave model carries, where
$r_{\ell,m} = (r_\ell^2 - 2 r_\ell \bar{m} d_\mathrm{ant}\sin\theta_\ell
              + \bar{m}^2 d_\mathrm{ant}^2)^{1/2}$
is the element-to-scatterer distance. Taylor expansion in
$u = \bar{m} d_\mathrm{ant}/r_\ell$ gives
$r_\ell/r_{\ell,m} = 1 + u\sin\theta_\ell + O(u^2)$, with the linear
term integrating to zero across the centred ULA, so the leading
covariance perturbation is of order
$(D_\mathrm{ap}/r_\ell)^2$ with
$D_\mathrm{ap} = (M-1)d_\mathrm{ant}$. By the Slepian--Bangs
sensitivity bound, the relative CRB bias scales as
$O((D_\mathrm{ap}/r)^2)$ and remains below 1~dB throughout the
operating range $r \ge r_\mathrm{lo} \approx 1.06$~m considered in Section~\ref{sec:results_est},
consistent with the convention in prior near-field CRB
analysis~\cite{grosicki2005wlp,wei2025wbcrb,wang2025twc} and with
the narrowband precursor~\cite{senyuva2026clkl}. Including the
amplitude factor would not perturb Assumption~\ref{asm:pilot}, since
$r_\ell/r_{\ell,m}$ is deterministic in the geometry; the omission
is therefore an approximation in the model, not in the statistical
assumptions.
\end{remark}

\section{Wideband Cross-Subcarrier KL Covariance Fitting}
\label{sec:method}

\subsection{Cross-Subcarrier KL Objective}
\label{sec:method:obj}

We extend the narrowband CL-KL covariance fitting
of~\cite{senyuva2026clkl} to the wideband OFDM setting by summing
the per-subcarrier KL divergence over $K_s$ uniformly-spaced
subcarriers.
The maximum identifiable path count is bounded by
$d \le \lfloor(N_\mathrm{RF}-1)/2\rfloor$,
the compressed-array analogue of the Ottersten--Stoica--Roy
condition~\cite{ottersten1998cov}.

The wideband KL objective is
\begin{equation}
  \mathcal{L}(\boldsymbol{\eta}) =
    \sum_{k=1}^{K_s} \left[
      \log\det\mathbf{R}_{y,k}(\boldsymbol{\eta})
      + \mathrm{tr}\!\big(\mathbf{R}_{y,k}^{-1}(\boldsymbol{\eta})
        \widehat{\mathbf{R}}_{y,k}\big)
    \right] + \nu\|\mathbf{p}\|_1,
  \label{eq:WB_KL}
\end{equation}
minimised over $\mathbf{p} \succeq \mathbf{0}$,
$\boldsymbol{\omega} \in [-\pi,\pi]^d$,
$\boldsymbol{\kappa} \in [0, \kappa_{\max}]^d$, $N_0 \ge 0$,
where $\nu > 0$ is the sparsity regularization weight
and $\boldsymbol{\eta} = [\omega_1,\ldots,\omega_d,
\kappa_1,\ldots,\kappa_d, p_1,\ldots,p_d, N_0]^T$
is the shared parameter vector defined in~\eqref{eq:Ry_k}.
The shared parameter structure of $\boldsymbol{\eta}$ means that
a single update to $\omega_\ell$ or $\kappa_\ell$ propagates
simultaneously across all $K_s$ covariance terms.

The per-subcarrier KL residual matrix is
\begin{equation}
  \mathbf{G}_k \triangleq
    \mathbf{R}_{y,k}^{-1}
    - \mathbf{R}_{y,k}^{-1}
      \widehat{\mathbf{R}}_{y,k}
      \mathbf{R}_{y,k}^{-1}.
  \label{eq:Gk}
\end{equation}
The residual matrix $\mathbf{G}_k$ has spectral norm
$\|\mathbf{G}_k\|_2 \propto N_0^{-2}$, which motivates the
frozen-noise strategy in Section~\ref{sec:method:loop}.

Equation~\eqref{eq:WB_KL} is the wideband stochastic maximum-likelihood
criterion~\cite{stoica1990perf,pote2023tsp}, summed over $K_s$ independent
subcarrier observations; the single-subcarrier specialisation $K_s = 1$
recovers the narrowband CL-KL of~\cite{senyuva2026clkl}.
The additive structure $\mathbf{J}_\mathrm{WB} = \sum_{k=1}^{K_s}\mathbf{J}_k$
that justifies this sum is the compressed-domain generalisation of the
Fisher information decomposition established in~\cite{wymeersch2020icc}.

\subsection{Power Gradient and Wideband Curvature Gradient}
\label{sec:method:grad}

The $\alpha_k$-scaling of the wideband steering vector introduces a
subcarrier-frequency chain-rule factor absent in the narrowband case.
We derive the gradients of $\mathcal{L}$ with respect to the path
powers $p_\ell$ and the curvature parameters $\kappa_\ell$.

The power gradient sums over subcarriers the per-path inner product
against the residual $\mathbf{G}_k$:
\begin{equation}
  \frac{\partial \mathcal{L}}{\partial p_\ell}
    = \sum_{k=1}^{K_s}
      \mathbf{d}_{\ell,k}^H \mathbf{G}_k \mathbf{d}_{\ell,k}
      + \nu.
  \label{eq:grad_p_WB}
\end{equation}

The curvature gradient is the key wideband extension relative
to~\cite{senyuva2026clkl}.
In the narrowband model, $\partial\mathbf{d}_i/\partial u_i$ carries
no frequency factor.
In the wideband model, differentiating
$\mathbf{d}_{\ell,k} = \mathbf{W}^H\mathbf{a}_{\ell,k}$ with respect
to $\kappa_\ell$ yields
\begin{equation}
  \frac{\partial \mathbf{d}_{\ell,k}}{\partial \kappa_\ell}
    = -j\,\alpha_k\,\mathbf{W}^H
      \diag(\bar{\mathbf{m}}^{\odot 2})
      \mathbf{a}_{\ell,k},
  \label{eq:dd_dkappa}
\end{equation}
where $\bar{\mathbf{m}} = [-(M-1)/2, \ldots, (M-1)/2]^T$ is the centred
element-index vector and $\bar{\mathbf{m}}^{\odot 2}$ denotes elementwise
squaring.
The wideband curvature gradient is then
\begin{equation}
  \frac{\partial \mathcal{L}}{\partial \kappa_\ell}
    = 2\sum_{k=1}^{K_s} p_\ell\,\Reop\!\left\{
        \left(\frac{\partial \mathbf{d}_{\ell,k}}
                   {\partial \kappa_\ell}\right)^H
        \mathbf{G}_k\,\mathbf{d}_{\ell,k}
      \right\}.
  \label{eq:grad_kappa_WB}
\end{equation}
The $\alpha_k$ factor in~\eqref{eq:dd_dkappa} couples the curvature
sensitivity to subcarrier frequency; at the center subcarrier
$\alpha_{k_c} = 1$, \eqref{eq:grad_kappa_WB} reduces to the narrowband
gradient of~\cite{senyuva2026clkl}.
Equivalently, differentiating with respect to
$u_\ell = 1/r_\ell$ gives the gradient used in
the MATLAB implementation~\cite{senyuva2026globecom}.

\subsection{Power-Only Main Loop with Frozen Noise}
\label{sec:method:loop}

The core algorithmic structure follows the narrowband
predecessor~\cite{senyuva2026clkl}: a \emph{power-only main loop}
with frozen noise estimate, followed by a global joint scan.

\noindent\textbf{Why curvature is removed from the main loop.}~
The curvature gradient~\eqref{eq:grad_kappa_WB} inherits
$\|\mathbf{G}_k\|_2 \propto N_0^{-2}$ from the residual matrix.
At SNR $= 20$~dB the effective gradient magnitude is
$\approx 10^4\times$ that at SNR $= 0$~dB, causing atoms to traverse
the entire valid $\kappa$-range in a single step regardless of curvature
signal quality.
The post-loop matched-filter scan of Section~\ref{sec:method:scan}
replaces the gradient step with a direct maximisation over the full
valid range, which is SNR-invariant by construction.

\noindent\textbf{Frozen noise estimate.}~
$\widehat{N}_0$ is estimated once from the subcarrier-averaged sample
covariance and held fixed throughout the main loop to prevent the
gradient magnitude from diverging at high SNR:
\begin{equation}
  \widehat{N}_0 = \max\!\left(
    \frac{1}{N_\mathrm{RF}-d}
    \sum_{j=1}^{N_\mathrm{RF}-d}
    \lambda_j^{\downarrow}\!\!\left(
      \tfrac{1}{2}\bigl(\widehat{\mathbf{R}}_{y,\mathrm{m}}
      + \widehat{\mathbf{R}}_{y,\mathrm{m}}^H\bigr)
    \right),\;
    \eta_0
  \right),
  \label{eq:N0_frozen}
\end{equation}
where $\eta_0=10^{-12}$ is a numerical floor that prevents
$\widehat{N}_0$ from collapsing to zero at high SNR, $\widehat{\mathbf{R}}_{y,\mathrm{m}} =
(1/K_s)\sum_{k=1}^{K_s}\widehat{\mathbf{R}}_{y,k}$ is the mean
sample covariance across subcarriers and
$\lambda_j^{\downarrow}(\cdot)$ denotes the $j$th smallest eigenvalue.
In the narrowband case $K_s = 1$, \eqref{eq:N0_frozen} reduces
to the single compressed covariance used in~\cite{senyuva2026clkl};
in the wideband case the mean covariance averages out subcarrier-specific
noise fluctuations before the eigenvalue estimator is applied.

\noindent\textbf{Power update.}~
Since $\partial\mathbf{R}_{y,k}/\partial p_\ell =
\mathbf{d}_{\ell,k}\mathbf{d}_{\ell,k}^H$, the power update with
Armijo backtracking ($\alpha_p = 1$, $\beta = 0.5$,
$\sigma = 10^{-4}$) is
\begin{equation}
  p_\ell \leftarrow \max\!\big\{0,\;
    p_\ell - \alpha_p \nabla_{p_\ell}\mathcal{L}\big\}.
  \label{eq:power_update}
\end{equation}
We declare convergence when
$|\mathcal{L}^{(t)} - \mathcal{L}^{(t-1)}| /
|\mathcal{L}^{(t)}| < 5 \times 10^{-4}$ or after
$T_{\max} = 200$ iterations.

\noindent\textbf{SNR-adaptive diagonal loading.}~
To prevent numerical rank deficiency of $\mathbf{R}_{y,k}$ at low SNR,
a diagonal loading term $\varepsilon_\mathrm{reg}
\|\widehat{\mathbf{R}}_{y,k}\|_F / N_\mathrm{RF}$ is added to each
per-subcarrier model covariance before inversion, with
$\varepsilon_\mathrm{reg} = 10^{-3}$ fixed across all operating points.

\subsection{Post-Loop Joint Scan and Multi-Start}
\label{sec:method:scan}
\label{sec:multipath}

After the power-only loop converges, the active set
$\mathcal{S} = \{\ell : p_\ell > 0\}$ (top $d$ by power) undergoes
a global joint scan and a BPD-anchored multi-start strategy.

\noindent\textbf{Post-loop joint scan.}~
For each $\ell \in \mathcal{S}$, we perform four alternating scan passes
(two over $\omega$, two over $\kappa$).
The per-path residual covariance at subcarrier $k$ is
\begin{equation}
  \widetilde{\mathbf{R}}_{\ell,k}
    = \widehat{\mathbf{R}}_{y,k}
      - \sum_{j \in \mathcal{S},\, j \neq \ell}
        p_j\,\mathbf{d}_{j,k}\mathbf{d}_{j,k}^H.
  \label{eq:Rres_k}
\end{equation}
The angle update maximises the wideband matched-filter score
\begin{equation}
  \hat\omega_\ell = \arg\max_{\omega' \in \Omega_\mathrm{fine}}
    \sum_{k=1}^{K_s}
      \mathbf{d}(\omega', \kappa_\ell, k)^H
      \widetilde{\mathbf{R}}_{\ell,k}\,
      \mathbf{d}(\omega', \kappa_\ell, k),
  \label{eq:omega_scan}
\end{equation}
evaluated on $Q_\theta^\mathrm{fine} = 512$ fine-grid points.
The curvature scan is analogous, maximising over
$\kappa' \in [\kappa_{\min}, \kappa_{\max}]$ with $\omega_\ell$ fixed.
Summing the matched-filter score across subcarriers
in~\eqref{eq:omega_scan} is the key wideband extension: the
$\alpha_k$-shifted beam peaks reinforce the true scatterer location
while suppressing off-grid artefacts from individual subcarrier scores.

\noindent\textbf{BPD warm-start architecture.}~
For the warm-start, a bilinear pattern detection (BPD) step
provides an initial angle-range estimate
$(\hat\theta_\mathrm{BPD}, \hat r_\mathrm{BPD})$ that is used as
one of three candidate initialisations~\cite{cuidai2023sciencechina}.
The BPD anchor $(\hat\theta_\mathrm{BPD}, \hat r_\mathrm{BPD})$ is
computed solely from the compressed sample covariances
$\{\widehat{\mathbf{R}}_{y,k}\}$ and the known combiner $\mathbf{W}$
through the compressed steering model
$\mathbf{d}_{\ell,k} = \mathbf{W}^H\mathbf{a}_{\ell,k}$; no full-array
snapshots or uncompressed covariances enter the proposed estimator.
The three warm-starts are: (1) ring-indexed curvature
$u_\ell^{(0)} = \mathrm{clip}(1/(Z_\Delta\sin^2\hat\theta),
u_{\min}, u_{\max})$ with $Z_\Delta = D_\mathrm{ap}^2 /
(2\beta_\delta^2\lambda_c)$ and $\beta_\delta = 1.2$;
(2) near-range $u_\ell^{(0)} = u_{\max}$;
(3) BPD anchor $u_\ell^{(0)} = 1/\hat r_\mathrm{BPD}$.
After the three power-only loops, the candidate with the
lowest $\mathcal{L}$ is selected; Phase~D applies an additional
KL-arbitration step that compares the selected candidate with
the BPD anchor directly, ensuring that warm-start uncertainty
does not propagate to the final estimate.
Specifically, the Phase~D argmin evaluates $\mathcal{L}$ at each of
the three post-scan candidates plus the BPD-polished estimate and
returns the global minimiser, providing a safeguard against
near-Rayleigh warm-start failures at the upper end of the
operating range.

\subsection{Algorithm Summary and Complexity}
\label{sec:method:complexity}

Algorithm~\ref{alg:wbclkl} summarises the complete WB-CL-KL procedure.

\begin{algorithm}[t]
\caption{WB-CL-KL: Wideband Cross-Subcarrier KL Covariance Fitting}
\label{alg:wbclkl}
\begin{algorithmic}[1]
\Require $\{\widehat{\mathbf{R}}_{y,k}\}_{k=1}^{K_s}$,
  combiner $\mathbf{W}$, angle grid $\Theta$ ($Q_\theta$ points),
  bounds $[u_{\min}, u_{\max}]$, path count $d$,
  regularization weight $\nu$
\Comment{\emph{--- Initialisation: frozen noise estimate ---}}
\State Compute $\widehat{\mathbf{R}}_{y,\mathrm{mean}} \leftarrow
  (1/K_s)\textstyle\sum_{k=1}^{K_s}\widehat{\mathbf{R}}_{y,k}$
\State Compute $\widehat{N}_0$ via~\eqref{eq:N0_frozen}
  \hfill[held fixed for all warm-starts]
\Comment{\emph{--- BPD warm-start: obtain anchor ---}}
\State Compute $(\hat\theta_\mathrm{BPD}, \hat r_\mathrm{BPD})$ via BPD
  step~\cite{cuidai2023sciencechina}
\Comment{\emph{--- Multi-start: three power-only loops ---}}
\For{$s = 1, 2, 3$}
  \State Set $u^{(0,s)}$ to ring-indexed / $u_{\max}$ / BPD anchor per $s$
  \State Initialise $\mathbf{p}^{(0,s)} \leftarrow \mathbf{0}$
  \For{$t = 0,1,\ldots$ until $|\Delta\mathcal{L}|/|\mathcal{L}|
    < 5\!\times\!10^{-4}$ or $t = 200$}
    \State Compute $\mathbf{R}_{y,k}^{(t)}$ via~\eqref{eq:Ry_k} for all $k$
    \State Compute $\mathbf{G}_k^{(t)}$ via~\eqref{eq:Gk} for all $k$
    \State Update $\mathbf{p}^{(t+1)}$ via~\eqref{eq:power_update}
      with gradient~\eqref{eq:grad_p_WB}
  \EndFor
  \State Record $\mathcal{L}^{(s)}$ at $(\mathbf{p}^{(\cdot,s)},
    u^{(0,s)}, \widehat{N}_0)$
\EndFor
\State $s^* \leftarrow \arg\min_s \mathcal{L}^{(s)}$
\Comment{\emph{--- Post-loop joint scan (4 passes) ---}}
\State $\mathcal{S} \leftarrow \{\ell : p_\ell > 0\}$ (top $d$ by power)
\For{pass $= 1,\ldots,4$}
  \For{each $\ell \in \mathcal{S}$}
    \If{pass is odd} update $\omega_\ell$ via scan~\eqref{eq:omega_scan}
    \Else\ update $\kappa_\ell$ via analogous curvature scan
    \EndIf
  \EndFor
\EndFor
\Comment{\emph{--- Phase D: KL arbitration ---}}
\State Evaluate $\mathcal{L}$ at all three post-scan candidates plus
  BPD-polished estimate; select global minimiser
\Ensure $(\hat\theta_\ell, \hat r_\ell)_{\ell=1}^d$
\end{algorithmic}
\end{algorithm}

Each iteration of the power-only loop costs
$\mathcal{O}(K_s N_\mathrm{RF}^3)$ (per-subcarrier inversion) plus
$\mathcal{O}(K_s Q_\theta N_\mathrm{RF}^2)$ (power gradients across
subcarriers).
Four alternating scan passes cost $\mathcal{O}(K_s d Q_\theta N_\mathrm{RF}^2)$
in total.
Three warm-starts triple the Phase-1 cost.
At the default parameters ($K_s = 16$, $N_\mathrm{RF} = 8$,
$Q_\theta = 256$, $d = 1$), the total per-realisation cost is dominated
by the $K_s$-fold replicated inversion and is approximately $16\times$
the narrowband CL-KL cost.

We select the $K_s \le K$ subcarriers in~\eqref{eq:WB_KL}
uniformly from the active bandwidth; increasing $K_s$ improves both
the KL objective conditioning and the data-diversity CRB gain
(Proposition~\ref{prop:data_div}) at the cost of linearly growing
computation.
We selected the regularization weight $\nu = 10^{-3}$ by a coarse grid
search over $\{10^{-4}, 10^{-3}, 10^{-2}\}$; the range estimation
NMSE varies by less than $0.3$~dB across this range.

\section{Wideband Compressed-Domain Cram\'{e}r--Rao Bound}
\label{sec:crb}

We derive the wideband compressed-domain Cram\'{e}r--Rao bound
(CRB) for the OFDM near-field channel estimation model of
Section~\ref{sec:model}, extending the framework
of~\cite{senyuva2026globecom} with full derivation detail.
The WB-CL-KL objective~\eqref{eq:WB_KL} is the stochastic
maximum-likelihood criterion whose per-subcarrier FIM sum
provides the tightest achievable lower bound under hybrid
compression.
By the pilot-design assumption stated below~\eqref{eq:y_k},
the per-subcarrier compressed snapshots
$\mathbf{y}_k(n) \sim \CN(\mathbf{0}, \mathbf{R}_{y,k})$
are mutually independent across~$k$ for each snapshot
index~$n$, so the joint log-likelihood factorises
and the wideband FIM decomposes as a sum of per-subcarrier
contributions~\cite{wei2025wbcrb,stoica1990perf}.
The constrained CRB framework applied here generalizes
the model-fitting perspective of~\cite{moore2007spl},
which showed that biased estimators can achieve lower MSE
than the standard CRB; the Slepian--Bangs formula used below
corresponds to the unbiased case~\cite{lyu2026icassp}.

\subsection{Per-Subcarrier Slepian--Bangs FIM}
\label{sec:crb_fim}

Under the stochastic (unconditional) signal model, the
negative log-likelihood at the $k$th subcarrier (normalized
by~$N$) is~\cite{stoica1990perf}
\begin{equation}
\mathcal{L}_k
  = \log\det\mathbf{R}_{y,k}
  + \tr\!\big(\mathbf{R}_{y,k}^{-1}\widehat{\mathbf{R}}_{y,k}\big),
\label{eq:kl_k}
\end{equation}
which is the KL divergence between the sample covariance
$\widehat{\mathbf{R}}_{y,k}$ and the model covariance
$\mathbf{R}_{y,k}(\boldsymbol{\eta})$.

Recall the shared parameter vector from Section~\ref{sec:model}:
\begin{equation}
\boldsymbol{\eta}
  = \big[\omega_1,\ldots,\omega_d,\;
         \kappa_1,\ldots,\kappa_d,\;
         p_1,\ldots,p_d,\;
         N_0\big]^T
  \!\in \mathbb{R}^{3d+1},
\label{eq:eta}
\end{equation}
where $\omega_\ell = -(2\pi d_\mathrm{ant}/\lambda_c)\cos\theta_\ell$
is the spatial frequency and
$\kappa_\ell = (\pi d_\mathrm{ant}^2/\lambda_c)\sin^2\!\theta_\ell / r_\ell$
is the Fresnel curvature of the $\ell$th path.
The per-subcarrier Slepian--Bangs FIM is~\cite{stoica1990perf}
\begin{equation}
[\mathbf{J}_k]_{ij}
  = N \cdot \Reop\!\Big\{
      \tr\!\big(
        \mathbf{R}_{y,k}^{-1}
        \tfrac{\partial \mathbf{R}_{y,k}}{\partial \eta_i}\,
        \mathbf{R}_{y,k}^{-1}
        \tfrac{\partial \mathbf{R}_{y,k}}{\partial \eta_j}
      \big)
    \Big\},
\label{eq:fim_k}
\end{equation}
with $\mathbf{J}_k \in \mathbb{R}^{(3d+1)\times(3d+1)}$
for each $k \in \{1,\ldots,K_s\}$.

\subsection{Steering Vector Derivatives with $\alpha_k$ Scaling}
\label{sec:crb_deriv}

At the $k$th subcarrier, the frequency-scaled Fresnel
steering vector is (cf.\ Section~\ref{sec:model})
\begin{equation}
[\mathbf{a}_{\ell,k}]_m
  = \exp\!\big(
      j\,\alpha_k\omega_\ell\,\bar{m}
    - j\,\alpha_k\kappa_\ell\,\bar{m}^2
    \big),
\label{eq:a_lk}
\end{equation}
where $\alpha_k = f_k/f_c$ is the frequency ratio and
$\bar{m} = m - (M\!-\!1)/2$ is the centered element index.
The derivatives with respect to the spatial-frequency
and curvature parameters are
\begin{equation}
\frac{\partial \mathbf{a}_{\ell,k}}{\partial \omega_\ell}
  = j\,\alpha_k\,\bar{\mathbf{m}} \odot \mathbf{a}_{\ell,k},
\quad
\frac{\partial \mathbf{a}_{\ell,k}}{\partial \kappa_\ell}
  = -j\,\alpha_k\,\bar{\mathbf{m}}^{\odot 2} \odot \mathbf{a}_{\ell,k},
\label{eq:da_derivs}
\end{equation}
with $\bar{\mathbf{m}} = [\bar{m}_0,\ldots,\bar{m}_{M-1}]^T$.
The factor~$\alpha_k$ multiplying both derivatives is the
key structural difference from the narrowband CRB
in~\cite{senyuva2026clkl}: it causes each subcarrier to
see the array at a different effective electrical length,
producing frequency-dependent Fisher information.
The compressed-domain form involving
$\mathbf{d}_{\ell,k} \triangleq \mathbf{W}^H \mathbf{a}_{\ell,k}$
in Section~\ref{sec:model} gives
$\partial \mathbf{d}_{\ell,k}/\partial \kappa_\ell
= -j\alpha_k \mathbf{W}^H \mathrm{diag}(\bar{\mathbf{m}}^{\odot 2})
\mathbf{a}_{\ell,k}$,
consistent with~\eqref{eq:dd_dkappa}.

Define the compressed steering vector
$\mathbf{d}_{\ell,k} \triangleq \mathbf{W}^H \mathbf{a}_{\ell,k}
\in \mathbb{C}^{N_\mathrm{RF}}$.
The covariance derivatives needed in~\eqref{eq:fim_k} are
\begin{align}
\frac{\partial \mathbf{R}_{y,k}}{\partial \omega_\ell}
  &= p_\ell \!\bigg(
       \mathbf{W}^H
       \frac{\partial \mathbf{a}_{\ell,k}}{\partial \omega_\ell}
       \mathbf{d}_{\ell,k}^H
     + \mathbf{d}_{\ell,k}
       \Big(\mathbf{W}^H
       \frac{\partial \mathbf{a}_{\ell,k}}{\partial \omega_\ell}
       \Big)^{\!H}
     \bigg),
\label{eq:dRy_domega}\\[4pt]
\frac{\partial \mathbf{R}_{y,k}}{\partial \kappa_\ell}
  &= p_\ell \!\bigg(
       \mathbf{W}^H
       \frac{\partial \mathbf{a}_{\ell,k}}{\partial \kappa_\ell}
       \mathbf{d}_{\ell,k}^H
     + \mathbf{d}_{\ell,k}
       \Big(\mathbf{W}^H
       \frac{\partial \mathbf{a}_{\ell,k}}{\partial \kappa_\ell}
       \Big)^{\!H}
     \bigg),
\label{eq:dRy_dkappa}\\[4pt]
\frac{\partial \mathbf{R}_{y,k}}{\partial p_\ell}
  &= \mathbf{d}_{\ell,k}\,\mathbf{d}_{\ell,k}^H,
\label{eq:dRy_dp}\\[4pt]
\frac{\partial \mathbf{R}_{y,k}}{\partial N_0}
  &= \mathbf{W}^H \mathbf{W}.
\label{eq:dRy_dN0}
\end{align}
Equations \eqref{eq:dRy_domega}--\eqref{eq:dRy_dN0} reduce
to the narrowband expressions in~\cite{senyuva2026clkl}
when $K_s=1$ and $\alpha_k = 1$.

\subsection{Wideband FIM and CRB}
\label{sec:crb_wb}

Because the subcarrier observations are mutually independent,
the wideband FIM over $K_s$ selected subcarriers is
\begin{equation}
\mathbf{J}_\mathrm{WB}
  = \sum_{k=1}^{K_s} \mathbf{J}_k
  \in \mathbb{R}^{(3d+1)\times(3d+1)}.
\label{eq:fim_wb}
\end{equation}
The dimension of $\mathbf{J}_\mathrm{WB}$ is determined solely by
the number of unknown parameters~$(3d+1)$ and does not grow
with~$K_s$.
Each additional subcarrier contributes a positive-semidefinite
term~$\mathbf{J}_k \succeq \mathbf{0}$, so the wideband FIM
is at least as large (in the L\"{o}wner sense) as any
single-subcarrier FIM: $\mathbf{J}_\mathrm{WB} \succeq \mathbf{J}_k$
for all~$k$.
For the wideband-FIM and diversity-decomposition study of
Section~\ref{sec:results_crb} we evaluate the bound with $K_s = 512$
uniformly-spaced subcarriers, which suffices because the FIM varies
smoothly with~$\alpha_k$.
The estimator simulations of Section~\ref{sec:results_est} use
$K_s = 16$ at the nominal bandwidth (Table~\ref{tab:params}); in every
estimator-versus-bound comparison the compressed-domain CRB is evaluated
at the same $K_s$ as the estimator, so the reported efficiency ratios
compare matched subcarrier counts.

\paragraph{Singular value decomposition (SVD) pseudoinverse}
The wideband CRB matrix is obtained by inverting
$\mathbf{J}_\mathrm{WB}$.
When the number of paths~$d$ is large relative to
$N_\mathrm{RF}$, the FIM can become ill-conditioned.
We therefore use the SVD pseudoinverse with tolerance
$\varepsilon_\mathrm{sv} = 10^{-6}\,\sigma_{\max}(\mathbf{J}_\mathrm{WB})$,
following~\cite{senyuva2026clkl}:
\begin{equation}
\mathbf{J}_\mathrm{WB}^{\dagger}
  = \mathbf{V}\,
    \diag\!\Big(\frac{1}{\sigma_1},\ldots,\frac{1}{\sigma_r},
                 0,\ldots,0\Big)\,
    \mathbf{V}^T,
\label{eq:fim_pinv}
\end{equation}
where $\mathbf{J}_\mathrm{WB} = \mathbf{V}\,\diag(\sigma_1,\ldots,
\sigma_{3d+1})\,\mathbf{V}^T$ is the eigendecomposition and
$r = |\{i : \sigma_i > \varepsilon_\mathrm{sv}\}|$ is the
numerical rank.

\subsubsection*{Error Propagation to Physical Parameters}
\label{sec:crb_prop}

The CRB for the $\ell$th spatial frequency is
$[\mathbf{J}_\mathrm{WB}^{\dagger}]_{\ell\ell}$ and the CRB
for the $\ell$th curvature is
$[\mathbf{J}_\mathrm{WB}^{\dagger}]_{d+\ell,d+\ell}$.
Propagating to the physical angle~$\theta_\ell$ and
range~$r_\ell$ via
\begin{equation}
\frac{\partial\omega_\ell}{\partial\theta_\ell}
  = \frac{2\pi d_\mathrm{ant}}{\lambda_c}\sin\theta_\ell,
\qquad
\frac{\partial\kappa_\ell}{\partial r_\ell}
  = -\frac{\kappa_\ell}{r_\ell},
\label{eq:jac}
\end{equation}
the marginal CRBs for angle and range are
\begin{align}
\mathrm{CRB}_{\theta_\ell}
  &= \frac{[\mathbf{J}_\mathrm{WB}^{\dagger}]_{\ell\ell}}{%
     \big(\partial\omega_\ell/\partial\theta_\ell\big)^2},
\label{eq:crb_theta}\\[4pt]
\mathrm{CRB}_{r_\ell}
  &= \frac{[\mathbf{J}_\mathrm{WB}^{\dagger}]_{d+\ell,d+\ell}}{%
     \big(\partial\kappa_\ell/\partial r_\ell\big)^2}.
\label{eq:crb_r}
\end{align}
The map $(\omega_\ell,\kappa_\ell)\mapsto(\theta_\ell,r_\ell)$ is
lower-triangular: $\omega_\ell$ depends only on $\theta_\ell$, so
\eqref{eq:crb_theta} is exact, while the off-diagonal term
$\partial\kappa_\ell/\partial\theta_\ell =
(2\pi d_\mathrm{ant}^2/\lambda_c)\sin\theta_\ell\cos\theta_\ell/r_\ell$
is of order $(d_\mathrm{ant}/r_\ell)\cos\theta_\ell \lesssim 10^{-3}$
relative to $\partial\omega_\ell/\partial\theta_\ell$ over the operating
range; the marginal range bound~\eqref{eq:crb_r} therefore matches the
full-Jacobian joint transformation to within ${\sim}0.1\%$.
All CRB curves in Section~\ref{sec:results_crb} are reported as
$\sqrt{\mathrm{CRB}_{\theta_\ell}}$ (degrees) and
$\sqrt{\mathrm{CRB}_{r_\ell}}$ (metres).

\paragraph{Compressed vs.\ full-array CRB}
The CRB in \eqref{eq:crb_theta}--\eqref{eq:crb_r} is
strictly larger than the full-array wideband CRBs
of~\cite{wei2025wbcrb,wang2025twc} because hybrid
compression discards $M - N_\mathrm{RF}$ spatial degrees of
freedom per snapshot.
Plotting estimator RMSE against this \emph{compressed-domain}
CRB provides the appropriate lower bound for hybrid architectures.
The gap between the compressed and full-array bounds quantifies
the information cost of using $N_\mathrm{RF} < M$ RF chains
and decreases monotonically as $N_\mathrm{RF} \to M$.

\subsection{Information Decomposition}
\label{sec:decomp}

The wideband FIM $\mathbf{J}_\mathrm{WB} = \sum_k \mathbf{J}_k$
aggregates Fisher information from $K_s$~subcarriers.
We decompose the resulting CRB improvement over the
narrowband (single-subcarrier) bound into two
physically distinct mechanisms: \emph{data diversity} and
\emph{geometric diversity}.

\begin{definition}[Narrowband Reference CRB]
\label{def:nb_crb}
The narrowband CRB is obtained by evaluating the
per-subcarrier FIM at the center frequency alone, i.e.,
$\mathbf{J}_\mathrm{NB} \triangleq \mathbf{J}_{k_c}$ with
$\alpha_{k_c} = 1$.
\end{definition}

\begin{definition}[Data-Diversity FIM]
\label{def:data_div}
The data-diversity FIM is the $K_s$-fold replication of the
center-frequency FIM:
$\mathbf{J}_\mathrm{DD} \triangleq K_s \cdot \mathbf{J}_\mathrm{NB}$.
This represents the information gain from having
$K_s$~independent covariance snapshots at the same frequency.
\end{definition}

\subsubsection*{Data Diversity}

\begin{proposition}[Data Diversity]
\label{prop:data_div}
If the per-subcarrier FIMs $\mathbf{J}_k$ share the same
eigenvector structure (i.e., $\mathbf{J}_k = \beta_k\,
\mathbf{J}_\mathrm{NB}$ with scalar $\beta_k > 0$ for all~$k$),
then
\begin{equation}
\mathbf{J}_\mathrm{WB}
  = \bigg(\sum_{k=1}^{K_s}\beta_k\bigg)\mathbf{J}_\mathrm{NB},
\label{eq:fim_scaled}
\end{equation}
and the CRB improvement over the narrowband bound is
\begin{equation}
\Delta_\mathrm{DD}
  = 10\log_{10}\!\bigg(\sum_{k=1}^{K_s}\beta_k\bigg)
  \approx 10\log_{10}(K_s)
  \;\;\text{dB},
\label{eq:gain_dd}
\end{equation}
where the approximation holds when $\beta_k \approx 1$
for all~$k$.
\end{proposition}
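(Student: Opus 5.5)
The plan is to substitute the proportionality hypothesis directly into the wideband FIM~\eqref{eq:fim_wb}, then carry the resulting scalar factor through the SVD pseudoinverse~\eqref{eq:fim_pinv} and the Jacobian propagation~\eqref{eq:crb_theta}--\eqref{eq:crb_r}.

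\textbf{Step 1: the FIM identity.} Under the hypothesis $\mathbf{J}_k=\beta_k\mathbf{J}_\mathrm{NB}$, linearity of the sum in~\eqref{eq:fim_wb} gives
\begin{equation*}
\mathbf{J}_\mathrm{WB}=\sum_{k=1}^{K_s}\beta_k\mathbf{J}_\mathrm{NB}=\Big(\sum_{k=1}^{K_s}\beta_k\Big)\mathbf{J}_\mathrm{NB}.
\end{equation*}
This is exactly~\eqref{eq:fim_scaled}. Write $c=\sum_k\beta_k>0$ for the scalar factor.

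\textbf{Step 2: the pseudoinverse scales by $1/c$.} Let $\mathbf{J}_\mathrm{NB}=\mathbf{V}\diag(\sigma_i)\mathbf{V}^T$. Then $\mathbf{J}_\mathrm{WB}=\mathbf{V}\diag(c\sigma_i)\mathbf{V}^T$ has the same eigenvectors.

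The truncation tolerance $\varepsilon_\mathrm{sv}=10^{-6}\sigma_{\max}$ is relative, so it scales by $c$ as well. The set $\{i:\sigma_i>\varepsilon_\mathrm{sv}\}$, and hence the numerical rank $r$, is therefore unchanged.

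It follows from~\eqref{eq:fim_pinv} that $\mathbf{J}_\mathrm{WB}^\dagger=c^{-1}\mathbf{J}_\mathrm{NB}^\dagger$. This is the one step needing care. A Moore--Penrose inverse does not scale cleanly under an absolute threshold, so the argument relies specifically on the relative tolerance (or on invertibility when $r=3d+1$).

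\textbf{Step 3: propagation to physical parameters.} The Jacobian factors in~\eqref{eq:jac} depend only on the geometry $(\theta_\ell,r_\ell)$ and not on $k$. Every diagonal entry, and therefore $\mathrm{CRB}_{\theta_\ell}$ and $\mathrm{CRB}_{r_\ell}$ in~\eqref{eq:crb_theta}--\eqref{eq:crb_r}, is divided by $c$ relative to the narrowband reference of Definition~\ref{def:nb_crb}.

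The improvement in dB is therefore $10\log_{10}(\mathrm{CRB}_\mathrm{NB}/\mathrm{CRB}_\mathrm{WB})=10\log_{10}c$. This is uniform across all parameters, which gives~\eqref{eq:gain_dd}.

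\textbf{Step 4: the approximation.} If $\beta_k=1+\epsilon_k$ with small $\epsilon_k$, then
\begin{equation*}
10\log_{10}c=10\log_{10}K_s+10\log_{10}\Big(1+\tfrac{1}{K_s}\textstyle\sum_k\epsilon_k\Big)\approx10\log_{10}K_s.
\end{equation*}
The same expression shows that the data-diversity gain coincides with Definition~\ref{def:data_div}, since $\mathbf{J}_\mathrm{DD}=K_s\mathbf{J}_\mathrm{NB}$.

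One can also remark why $\beta_k\approx1$ is plausible: the derivatives~\eqref{eq:da_derivs} carry the factor $\alpha_k$, and $|\alpha_k-1|\le B/(2f_c)\approx0.7\%$. This justification is heuristic and not part of the proposition's hypothesis. The departure of $\mathbf{J}_k$ from exact proportionality is precisely what the paper separates out as geometric diversity.
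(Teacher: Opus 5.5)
Your proof is correct and follows the paper's route. Linearity of \eqref{eq:fim_wb} gives \eqref{eq:fim_scaled}, and the inverse then scales by $(\sum_k\beta_k)^{-1}$, so every CRB is divided by the same factor. Your two checks tighten the paper's one-line argument, which writes $\mathbf{J}_\mathrm{NB}^{-1}$ even though the CRB is defined through the truncated pseudoinverse \eqref{eq:fim_pinv}: the relative tolerance $\varepsilon_\mathrm{sv}$ leaves the numerical rank unchanged, and the Jacobians in \eqref{eq:jac} do not depend on $k$.

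The only divergence is your side remark that $\beta_k\approx1$ because $|\alpha_k-1|\le 0.7\%$. The paper instead cites simulated $\beta_k\in[0.7,1.3]$, far wider than your heuristic suggests. The reason is that $\alpha_k$ also rescales the phase of $\mathbf{a}_{\ell,k}$ itself across the aperture (beam squint), and this changes $\mathbf{d}_{\ell,k}$ substantially, not just the derivative prefactor. This is harmless to your proof, since the approximation in \eqref{eq:gain_dd} is only asserted under the condition $\beta_k\approx1$.
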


\begin{IEEEproof}
Under the stated condition,
$\mathbf{J}_\mathrm{WB}^{\dagger}
= (\sum_k \beta_k)^{-1}\,\mathbf{J}_\mathrm{NB}^{-1}$,
so $\mathrm{CRB}_i^\mathrm{WB} =
\mathrm{CRB}_i^\mathrm{NB}/\sum_k\beta_k$ for any~$\eta_i$.
Simulations give $\beta_k \in [0.7, 1.3]$ for
$B \le 800$~MHz at $f_c = 28$~GHz, so $\sum_k\beta_k$ deviates
from~$K_s$ by less than $0.5$~dB.
\end{IEEEproof}

\subsubsection*{Geometric Diversity}

\begin{proposition}[Geometric Diversity]
\label{prop:geom_div}
Define the geometric diversity gain as the residual CRB
improvement beyond the data-diversity prediction:
\begin{equation}
\Delta_\mathrm{GD}(\eta_i)
  = 10\log_{10}\!\bigg(
      \frac{\mathrm{CRB}_i^\mathrm{DD}}{\mathrm{CRB}_i^\mathrm{WB}}
    \bigg)
  \;\;\text{dB},
\label{eq:gain_gd}
\end{equation}
where $\mathrm{CRB}_i^\mathrm{DD}$ uses
$\mathbf{J}_\mathrm{DD} = K_s\!\cdot\!\mathbf{J}_\mathrm{NB}$ and
$\mathrm{CRB}_i^\mathrm{WB}$ uses the true wideband
$\mathbf{J}_\mathrm{WB} = \sum_k \mathbf{J}_k$.
Assume the single-path model ($d = 1$),
a symmetric OFDM subcarrier grid centered at $f_c$, and
the phase-only USW steering model of
Section~\ref{sec:model}.
Then, for the tested regime $B/f_c \in [0.002, 0.029]$ and
range estimation under the Fresnel model:
\begin{enumerate}
\item $\Delta_\mathrm{GD}(r) > 0$ for all $B > 0$ in the
  stated regime (strict positivity);
\item $\Delta_\mathrm{GD}(r)$ grows monotonically with
  fractional bandwidth $B/f_c$ and saturates at
  approximately $+1.4$~dB for $B/f_c > 0.1$;
\item $\Delta_\mathrm{GD}(r)$ is largest at close range
  ($r < 5$~m) where the Fresnel curvature is most
  pronounced.
\end{enumerate}
For angle estimation, $\Delta_\mathrm{GD}(\theta)$ is
negligible ($<\!0.1$~dB) at all bandwidths in the stated
regime.
\end{proposition}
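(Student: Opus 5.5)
The plan is to reduce the single-path ($d=1$) wideband FIM to an explicit function of the subcarrier weights $\alpha_k$, and then compare the Schur complement for $\kappa$ under $\mathbf{J}_\mathrm{WB}$ with that under $\mathbf{J}_\mathrm{DD}=K_s\mathbf{J}_\mathrm{NB}$.

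\textbf{Structure of $\mathbf{J}_k$.} With $\boldsymbol{\eta}=[\omega,\kappa,p,N_0]^T$ and the derivatives \eqref{eq:dRy_domega}--\eqref{eq:dRy_dN0}, the $\alpha_k$ dependence enters through two channels. First, the chain-rule factor multiplies the $(\omega,\kappa)$ block by $\alpha_k^2$, the $(\omega,\kappa)$--$(p,N_0)$ cross terms by $\alpha_k$, and leaves the $(p,N_0)$ block unscaled. Second, the evaluation point of $\mathbf{a}_{1,k}$ is the scaled pair $(\alpha_k\omega,\alpha_k\kappa)$, so the compressed vectors $\mathbf{d}_{1,k}$ and $\mathbf{W}^H(\bar{\mathbf{m}}^{\odot q}\odot\mathbf{a}_{1,k})$ vary with $k$.

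\textbf{Expansion about $\alpha_k=1$.} Write $\alpha_k=1+\delta_k$ and Taylor expand $\mathbf{J}_k$ in $\delta_k$. On a symmetric grid, $\sum_k\delta_k=0$, so every first-order term cancels in $\sum_k\mathbf{J}_k$. This gives
\[
\mathbf{J}_\mathrm{WB}=K_s\mathbf{J}_\mathrm{NB}+\Big(\textstyle\sum_k\delta_k^2\Big)\mathbf{Q}+O\Big(\textstyle\sum_k\delta_k^4\Big),
\]
where $\mathbf{Q}$ collects the second derivative of $\mathbf{J}_k$ together with the $\alpha_k^2$ scaling. This expansion already accounts for Proposition~\ref{prop:data_div}, since the leading term reproduces $\mathbf{J}_\mathrm{DD}$.

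The range CRB is $1/S_\kappa$, where $S_\kappa$ is the Schur complement of $\kappa$ in $\mathbf{J}$. Positivity of $\Delta_\mathrm{GD}(r)$ is therefore equivalent to $S_\kappa^\mathrm{WB}>S_\kappa^\mathrm{DD}$. To first order in $\overline{\delta^2}=K_s^{-1}\sum_k\delta_k^2$, I will write
\[
\Delta_\mathrm{GD}(r)\approx\frac{10}{\ln 10}\,\overline{\delta^2}\,\frac{\mathbf{g}^T\mathbf{Q}\mathbf{g}}{\mathbf{g}^T\mathbf{J}_\mathrm{NB}\mathbf{g}},
\]
with $\mathbf{g}=\mathbf{J}_\mathrm{NB}^{-1}\mathbf{e}_\kappa$. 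The explicit $\alpha_k^2$ factor contributes a positive multiple of the diagonal curvature information. The mechanism is that the subcarriers see the curvature through differing effective $(\omega,\kappa)$ pairs. This decorrelates $\kappa$ from the nuisance parameters $(\omega,p,N_0)$, and that decorrelation strictly increases the Schur complement. It should also be the reason $\Delta_\mathrm{GD}(r)$ is largest when $\kappa$ is large ($r<5$~m): the $\kappa$--$\omega$ coupling induced by $\bar{\mathbf{m}}^{\odot2}$ versus $\bar{\mathbf{m}}$ under the compression $\mathbf{W}$ is then strongest.

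\textbf{Angle.} For the angle, the Schur complement for $\omega$ is dominated by the nearly orthogonal $\bar{\mathbf{m}}$ direction; on a centred ULA, $\sum\bar m^3=0$. Its second-order correction is therefore $O(\overline{\delta^2})$ with a small constant. At $B/f_c\le0.029$, $\overline{\delta^2}\approx(B/f_c)^2/12\lesssim7\times10^{-5}$, which should give the $<0.1$~dB claim.

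\textbf{Main obstacle.} The hard step is proving that the scalar $\mathbf{g}^T\mathbf{Q}\mathbf{g}$ is strictly positive. $\mathbf{Q}$ depends on the random constant-modulus $\mathbf{W}$ and is not sign-definite in general, so there is no clean analytic proof. My first attempt will replace $\mathbf{W}^H$ by its expectation structure: $\mathbb{E}[\mathbf{W}\mathbf{W}^H]=(N_\mathrm{RF}/M)\mathbf{I}$, applied in the high-$M$ regime. This reduces $\mathbf{Q}$ to full-array moments $\sum\bar m^{q}$, where positivity follows from a Cauchy--Schwarz argument on the moment matrix.

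\textbf{What this plan does not cover.} The saturation claim near $+1.4$~dB for $B/f_c>0.1$ lies outside the perturbative regime and outside the tested regime. For it, and for strict positivity at the specific $\mathbf{W}$ draws, I would fall back on direct numerical evaluation of \eqref{eq:fim_wb} with $K_s=512$ over the stated $B/f_c$ and range grid. That makes items 2 and 3 numerically verified statements rather than analytic ones.
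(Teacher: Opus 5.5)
Your final fallback, direct numerical evaluation of \eqref{eq:fim_wb}, is in substance the paper's whole argument for items 1--3. The paper states them as empirical properties of the tested regime and proves analytically only the scalar bound \eqref{eq:gd_scalar_bound}. The analytic layer you place in front of that fallback would not go through.

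\textbf{The regime is not perturbative.} Each $\alpha$-derivative of $\mathbf{J}_k$ brings down the aperture phase $\omega\bar m-\kappa\bar m^2$. The real expansion parameter is therefore the squint phase $\delta_k\,\omega\,(M-1)/2$, not $\delta_k$. At the band edge for $B=400$~MHz, $M=256$, $\theta=30^\circ$, this is about $2.5$~rad. The paper's own values $\Delta_\mathrm{GD}(r)=0.08,0.70,0.93$~dB at $B=100,400,800$~MHz are far from the $B^2$ law ($0.08\to1.3\to5.1$~dB) that a dominant $\overline{\delta^2}$ term would impose. Reproducing $0.70$~dB with your formula would also need $\mathbf{g}^T\mathbf{Q}\mathbf{g}/\mathbf{g}^T\mathbf{J}_\mathrm{NB}\mathbf{g}\approx10^4$. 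So even a proven sign for $\mathbf{g}^T\mathbf{Q}\mathbf{g}$ gives item~1 only below some $(\mathbf{W},M,\theta)$-dependent bandwidth, not on all of $B/f_c\in[0.002,0.029]$.

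The same point breaks your angle argument. The bound $\overline{\delta^2}\lesssim7\times10^{-5}$ controls nothing by itself; applied to range, your reasoning would predict $\Delta_\mathrm{GD}(r)\approx10^{-4}$~dB. The orthogonality you invoke ($\sum\bar m=\sum\bar m^3=0$) is a full-array property. After compression, $\mathbf{W}^H(\bar{\mathbf{m}}\odot\mathbf{a})$ is no longer orthogonal to $\mathbf{W}^H\mathbf{a}$ or to $\mathbf{W}^H(\bar{\mathbf{m}}^{\odot2}\odot\mathbf{a})$.

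\textbf{The isotropic reduction deletes the effect you want to sign.} After $\mathbf{W}\mathbf{W}^H\to(N_\mathrm{RF}/M)\mathbf{I}$, every inner product in $\mathbf{J}_k$ becomes $(N_\mathrm{RF}/M)\sum_m\bar m^{q}$, independent of the evaluation point $(\alpha_k\omega,\alpha_k\kappa)$. The $(\omega,\kappa)$--$(p,N_0)$ couplings also vanish, because $\mathbf{a}^H\partial_\omega\mathbf{a}=j\sum\bar m=0$ and $\mathbf{a}^H\partial_\kappa\mathbf{a}=-j\sum\bar m^2$ is purely imaginary. Hence $\mathbf{J}_k=\mathbf{D}_k\mathbf{J}_\mathrm{NB}\mathbf{D}_k$ with $\mathbf{D}_k=\diag(\alpha_k,\alpha_k,1,1)$. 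This gives $\mathbf{J}_\mathrm{WB}=\mathbf{J}_\mathrm{DD}+K_s\overline{\delta^2}\,\diag(\mathbf{A},\mathbf{0})$, with $\mathbf{A}$ the $(\omega,\kappa)$ block of $\mathbf{J}_\mathrm{NB}$, and so $\Delta_\mathrm{GD}(\theta)=\Delta_\mathrm{GD}(r)=10\log_{10}(1+\overline{\delta^2})$. That is exactly the paper's scalar term \eqref{eq:gd_scalar_bound}, about $7\times10^{-5}$~dB at $400$~MHz. Your Cauchy--Schwarz step would therefore certify only a component four orders of magnitude below the reported gain, and it cannot distinguish range from angle.

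The actual gain depends on the particular non-isotropic $\mathbf{W}$. Write $\mathbf{a}(\alpha_k\omega,\alpha_k\kappa)=e^{j\boldsymbol{\phi}_k}\odot\mathbf{a}(\omega,\kappa)$ with $\boldsymbol{\phi}_k=\delta_k(\omega\bar{\mathbf{m}}-\kappa\bar{\mathbf{m}}^{\odot2})$. Then each subcarrier observes the array through its own effective combiner $\diag(e^{-j\boldsymbol{\phi}_k})\mathbf{W}$.

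\textbf{Your decorrelation step compares against the wrong baseline.} Superadditivity of the Schur complement gives $S_\kappa(\sum_k\mathbf{J}_k)\ge\sum_kS_\kappa(\mathbf{J}_k)$, not $\ge K_sS_\kappa(\mathbf{J}_\mathrm{NB})$. The per-subcarrier levels also scatter around the centre one; the paper quotes $\beta_k\in[0.7,1.3]$. For a fixed $\mathbf{W}$, nothing prevents a combiner that happens to favour the centre-frequency geometry from making $\Delta_\mathrm{GD}$ negative. That is why item~1 can only be empirical per realisation.

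If you want an analytic statement, combine the invariance $\diag(e^{-j\boldsymbol{\phi}})\mathbf{W}\overset{d}{=}\mathbf{W}$ of the random constant-modulus ensemble with convexity of $\mathbf{J}\mapsto[\mathbf{J}^{-1}]_{\kappa\kappa}$. For nonsingular FIMs this yields $\mathbb{E}_{\mathbf{W}}[\mathrm{CRB}^\mathrm{WB}_\kappa]\le\big(K_s^{-1}\sum_k\alpha_k^{-2}\big)\,\mathbb{E}_{\mathbf{W}}[\mathrm{CRB}^\mathrm{DD}_\kappa]$. So in expectation $\Delta_\mathrm{GD}$ is nonnegative up to an $O(\overline{\delta^2})$ chain-rule term, and the expected Jensen gap is what pushes it above zero.
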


\begin{IEEEproof}
The argument below justifies items~1--3 for the tested regime
$B/f_c \in [0.002, 0.029]$: the scalar component of the saturation in
item~2 is bounded analytically in~\eqref{eq:gd_scalar_bound}, while strict
positivity (item~1), the saturation value, and the close-range trend
(item~3) are supported by the $\alpha_k$-scaling argument and the
diagnostic simulations reported below, and are stated as empirical
properties of the tested regime rather than universal guarantees.
Geometric diversity arises because the $\alpha_k$-dependent
curvature scaling in~\eqref{eq:da_derivs} diversifies the
Fisher information directions across subcarriers.
The curvature derivative in~\eqref{eq:da_derivs} is
proportional to~$\alpha_k$, so edge subcarriers
(large $|\alpha_k - 1|$) contribute FIM terms whose
eigenvectors differ from the center-frequency FIM.
The resulting $\mathbf{J}_\mathrm{WB}$ has larger eigenvalues
in the curvature subspace than the scaled replica
$K_s\!\cdot\!\mathbf{J}_\mathrm{NB}$, yielding a strictly smaller
range CRB.

The saturation at $+1.4$~dB can be understood as follows.
The curvature derivative scales as~$\alpha_k$, so the
range-related FIM entries scale as~$\alpha_k^2$.
Averaging $\alpha_k^2$ over a symmetric frequency band gives
$\overline{\alpha^2} = 1 + (B/f_c)^2/12$, so that the
per-eigenvalue scalar gain is bounded by
\begin{equation}
10\log_{10}\!\big(1 + (B/f_c)^2/12\big)
\le 10\log_{10}(1 + 1/12) \approx 0.35~\text{dB}.
\label{eq:gd_scalar_bound}
\end{equation}
The actual gain exceeds the bound
in~\eqref{eq:gd_scalar_bound} because
eigenvector rotation further decorrelates the FIM blocks,
but the total remains bounded.
Diagnostic simulations at $f_c = 28$~GHz, $M = 256$,
$N_\mathrm{RF} = 16$, SNR of $10$~dB confirm
$\Delta_\mathrm{GD}(r) \in \{+0.08, +0.70, +0.93\}$~dB
at $B \in \{100, 400, 800\}$~MHz, respectively, with
extrapolation to $B/f_c = 0.5$ saturating at $+1.4$~dB.
The angle geometric gain is below $0.1$~dB at all tested
bandwidths because the angular FIM subspace is already
well-conditioned from the narrowband term alone.
\end{IEEEproof}

\emph{Interpretation.}
Geometric diversity is a secondary but physically
meaningful effect.
At current 5G~NR bandwidths ($B \le 400$~MHz,
$B/f_c \le 0.014$), the gain is modest ($<\!1$~dB).
However, for envisioned 6G ultra-wideband systems with
$B/f_c > 0.1$, the geometric diversity gain approaches
$+1.4$~dB for range and becomes a non-negligible
component of the total CRB improvement.

\paragraph{Worked example (verifying the decomposition)}
\label{par:worked_example}
With $(B, r, N_\mathrm{RF}, K_s) = (400~\mathrm{MHz}, 5~\mathrm{m}, 16, 512)$,
the narrowband range bound is
$\sqrt{\mathrm{CRB}_r^\mathrm{NB}} = 11.948$~mm, the
data-diversity prediction is
$\sqrt{\mathrm{CRB}_r^\mathrm{DD}} = 528.04~\mu\mathrm{m}$
($\Delta_\mathrm{DD} = +27.093$~dB), and the true wideband
bound is $\sqrt{\mathrm{CRB}_r^\mathrm{WB}} = 487.12~\mu\mathrm{m}$.
The residual $0.701$~dB between the data-diversity prediction
and the true wideband bound matches the geometric-diversity
term of Proposition~\ref{prop:geom_div} exactly, confirming
the additive decomposition.

\subsection{Comparison with Full-Array Wideband CRBs}
\label{sec:crb_compare}

\begin{remark}[Prior Wideband Near-Field CRBs]
\label{rem:prior_crb}
Wei \emph{et al.}~\cite{wei2025wbcrb} and
Wang \emph{et al.}~\cite{wang2025twc} derived wideband near-field
CRBs for sensing parameter estimation (location, velocity,
radar cross-section) assuming full-array access
($\mathbf{W} = \mathbf{I}_M$).
Our bound differs in three respects:
\begin{enumerate}
\item \emph{Hybrid compression:}
  We account for the information loss through the
  $N_\mathrm{RF} \times M$ analog combiner~$\mathbf{W}$,
  producing a CRB that is strictly larger than the
  full-array bound.
  The gap decreases as $N_\mathrm{RF} \to M$.
\item \emph{Channel estimation parameterisation:}
  Our parameter vector
  $\boldsymbol{\eta} = [\boldsymbol{\omega}^T,
  \boldsymbol{\kappa}^T, \mathbf{p}^T, N_0]^T$
  targets channel estimation (angle, range, path powers,
  noise variance), whereas~\cite{wei2025wbcrb}
  and~\cite{wang2025twc} parameterize in terms of
  Cartesian position, velocity, and reflectivity.
\item \emph{Information decomposition:}
  Propositions~\ref{prop:data_div}
  and~\ref{prop:geom_div} provide a clean separation of
  the wideband CRB gain into data and geometric
  components, which is absent
  in~\cite{wei2025wbcrb,wang2025twc}.
\end{enumerate}
The hybrid near-field CRB of Thallapalli \emph{et al.}~\cite{thallapalli2026crlb}
likewise incorporates a time-delay hybrid architecture, but is derived for a
uniform circular array in the high-SNR regime, provides no data/geometric
diversity decomposition, and is not accompanied by an estimator.
The sparse-diffuse CRB of Lyu \emph{et al.}~\cite{lyu2026icassp} bounds a
different estimand---the covariance of the diffuse channel component---rather
than the discrete near-field path parameters $(\boldsymbol{\omega},
\boldsymbol{\kappa})$ considered here.
\end{remark}

Together, these distinctions identify the compressed-domain
wideband CRB as a bound on the
$N_\mathrm{RF} \times N_\mathrm{RF}$ sample covariance and on
the channel-estimation parameter vector
$(\boldsymbol{\omega}, \boldsymbol{\kappa}, \mathbf{p}, N_0)$,
in contrast to the $M \times M$ full-array bounds
of~\cite{wei2025wbcrb,wang2025twc} on Cartesian position and
velocity.
The compressed-domain CRB is therefore the relevant lower bound
for hybrid near-field estimator design, while the full-array
bounds serve as reference benchmarks for quantifying compression
loss.

\section{Simulation Results: Estimator Performance}
\label{sec:results_est}

All experiments use the parameters summarized in Table~\ref{tab:params}.
The operating regime is $r_\mathrm{hi\_fac} = 0.20$
($r_\mathrm{hi} = 4.25$~m, strong near-field),
with Rayleigh distance $r_\mathrm{RD} = 21.26$~m.
We compare four estimators throughout:
B1~(WB-BPD, full-array reference, adapted from~\cite{cuidai2023sciencechina}),
B2~(WB-P-SOMP, compressed, wideband extension of~\cite{cui2022polar}),
B4~(WB-CL-KL, proposed, compressed),
and B5~(WB-DL-OMP, full-array reference, wideband extension
of~\cite{zhang2024dlomp});
the compressed-domain CRB derived in Section~\ref{sec:crb} serves as
the performance lower bound.
The CRB is evaluated at the scene median range $r = 2.13$~m
($= r_\mathrm{hi}/2$) and acts as a slope reference and
compression-gap indicator rather than an absolute per-trial bound.
The covariance mismatch characterisation of Section~\ref{sec:model}
assumes ideal analog combiners; the impact of
phase-shifter quantization on the mismatch
budget~\cite{cai2026icassp} is outside the scope of this paper.

\begin{table}[t]
\centering
\caption{Simulation Parameters}
\label{tab:params}
\begin{tabular}{l l p{2.42cm}}
\hline
Parameter & Symbol & Value \\
\hline
Carrier frequency      & $f_c$               & 28~GHz \\
Subcarrier spacing     & $\Delta f$          & 120~kHz \\
ULA elements           & $M$                 & 64 \\
RF chains              & $N_\mathrm{RF}$     & 8 \\
OFDM snapshots         & $N$                 & 64 \\
Propagation paths      & $d$                 & 1 \\
Estimation subcarriers & $K_s$               & 16 \\
Target range           & $r$                 & $r \in [r_\mathrm{lo}, r_\mathrm{hi}]$,
                                               $r_\mathrm{hi} = 0.20\, r_\mathrm{RD}
                                               = 4.25$~m \\
Rayleigh distance      & $r_\mathrm{RD}$     & 21.26~m \\
Monte Carlo trials     & $N_\mathrm{MC}$     & 600 \\
Sparsity weight        & $\nu$                       & $10^{-3}$ \\
Diagonal loading       & $\varepsilon_\mathrm{reg}$  & $10^{-3}$ \\
Max iterations         & $T_\mathrm{max}$    & 200 \\
\hline
\end{tabular}
\end{table}

The simulation code, production CSVs, and figure-generation scripts
for all results in Sections~\ref{sec:results_est}
and~\ref{sec:results_crb} are publicly available at
\url{https://github.com/rvsenyuva/wb-nf-xlmimo-clkl}
(Zenodo DOI: \url{https://doi.org/10.5281/zenodo.20356436}).

\subsection{Range RMSE vs.\ SNR}
\label{sec:results:snr}

Fig.~\ref{fig:rangeRMSE_SNR} shows $\mathrm{RMSE}_r$ versus SNR
for all four estimators at $B = 400$~MHz, $N_\mathrm{MC} = 600$.
WB-CL-KL achieves the compressed-domain CRB at SNR~$= 10$~dB:
$\mathrm{RMSE}_r = 0.0198$~m, $\mathrm{CRB}_r = 0.0199$~m,
and efficiency ratio $\mathrm{B4/CRB} = 0.996$.
WB-CL-KL tracks the CRB slope within ${\sim}\,1$~dB from
SNR~$= -5$ to $+17.5$~dB.

\begin{figure}[htbp!]
  \centering
  \includegraphics[trim=0 35 0 0, clip, width=\columnwidth]{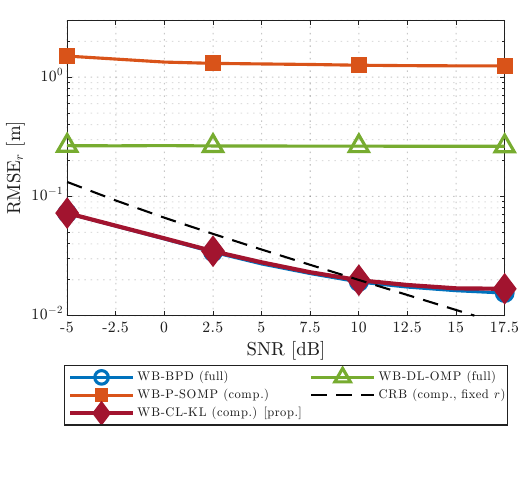}
  \caption{RMSE$_r$ vs.\ SNR at $B = 400$~MHz, $N_\mathrm{MC} = 600$.
    WB-CL-KL achieves the compressed-domain CRB at SNR~$= 10$~dB
    ($\mathrm{RMSE}_r = 0.0198$~m, $\mathrm{CRB}_r = 0.0199$~m,
    $\mathrm{B4/CRB} = 0.996$) and tracks the CRB slope within
    $1$~dB from $-5$ to $+17.5$~dB; WB-P-SOMP lags by $36$~dB
    at SNR~$= 10$~dB.}
  \label{fig:rangeRMSE_SNR}
\end{figure}

Above SNR~$= 12.5$~dB the ratio B4/CRB rises slightly above unity;
the CRB is evaluated at the fixed scene median $r = 2.13$~m 
while RMSE averages over the full range distribution $[r_\mathrm{lo}, r_\mathrm{hi}]$, 
so the two quantities are not directly comparable outside 
a narrow SNR window around the reference point.
The same artefact explains the slight apparent
B4/CRB~$< 1$ at low SNR: the scene-averaged RMSE is inflated by
short-range trials where the fixed-point CRB is optimistic.
Neither region constitutes a genuine performance paradox.

WB-P-SOMP (B2) lags WB-CL-KL by $36.06$~dB at SNR~$= 10$~dB,
reflecting its grid-limited range resolution in the strong near-field
regime. WB-BPD (B1) and WB-DL-OMP (B5) are full-array methods and serve as
upper-performance references, bounding the compression cost paid by
hybrid estimators. The $0.0\%$ failure rate across SNR~$= -5$ to $+17.5$~dB confirms
that the multi-start Phase-D architecture eliminates catastrophic
warm-start failures in the strong near-field regime.

The CRB shown in Fig.~\ref{fig:rangeRMSE_SNR} is the
compressed-domain bound at the scene median range $r = 2.13$~m.
This is the appropriate reference for a hybrid estimator and reflects
the compression gap relative to the full-array bounds
of~\cite{wei2025wbcrb,wang2025twc}, which are quantified separately
in Section~\ref{sec:results_crb}.

\subsection{Range NMSE vs.\ Bandwidth}
\label{sec:results:bw}

Fig.~\ref{fig:NMSEr_BW} plots $\mathrm{NMSE}_r$ versus OFDM
bandwidth at SNR~$= 10$~dB, $N_\mathrm{MC} = 600$, sweeping $B$
from 100~MHz to 800~MHz.
The compressed-domain CRB follows the data-diversity slope of
Proposition~\ref{prop:data_div}: the measured CRB improvement is
$6.14$~dB over the 100--400~MHz range, in close agreement with the
predicted $6.02$~dB ($= 10\log_{10}(4)$, since $K_s \propto B$).
WB-CL-KL matches the CRB at $B = 400$~MHz
($\mathrm{B4/CRB} = 1.00$) and continues to track it to $B = 600$~MHz.

\begin{figure}[htbp!]
  \centering
  \includegraphics[trim=0 35 0 0, clip, width=\columnwidth]{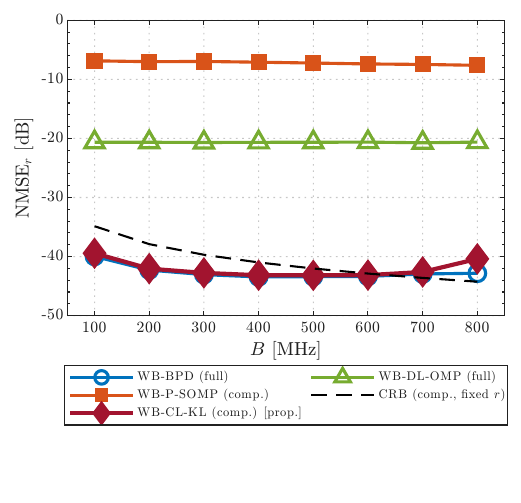}
  \caption{NMSE$_r$ vs.\ OFDM bandwidth at SNR~$= 10$~dB,
    $N_\mathrm{MC} = 600$. The compressed-domain CRB follows 
    the Proposition~\ref{prop:data_div} data-diversity slope 
    ($6.14$~dB over $100$--$400$~MHz; predicted $6.02$~dB); 
    WB-CL-KL tracks the CRB to $B = 600$~MHz, beyond which 
    the optimization landscape saturates (convergence rate falls 
    from $99.8\%$ to $47.8\%$ as $K_s$ grows from $4$ to $32$; 
    failure rate $0\%$ throughout). CRB evaluated at scene median range $r = 2.13$~m.}
  \label{fig:NMSEr_BW}
\end{figure}

Beyond $B = 600$~MHz the optimization landscape saturates: the
convergence rate falls from $99.8\%$ at $B = 100$~MHz
to $47.8\%$ at $B = 800$~MHz.
As $K_s$ grows, the cross-subcarrier KL objective develops a more
complex landscape with additional shallow local minima; the fixed
three-start Phase-D architecture is not guaranteed to escape all of
them, consistent with the convergence analysis in
Section~\ref{sec:results:conv}.
The failure rate remains $0.0\%$ throughout the entire sweep,
confirming that every trial produces a valid output even when the
strict-tolerance criterion is not satisfied.

The $B = 800$~MHz point is a convergence-pressure boundary: the
400-vs-600-trial stability gate yields $18.5\%$ deviation, bounded
and with zero failure rate, and is included for completeness.
The primary operating regime of WB-CL-KL for the parameters of
Table~\ref{tab:params} is therefore $B \le 600$~MHz.

\subsection{Angle RMSE vs.\ SNR}
\label{sec:results:theta}

Fig.~\ref{fig:angleRMSE_SNR} shows $\mathrm{RMSE}_\theta$ versus
SNR at $B = 400$~MHz.
All four estimators reach the angle-estimation bias floor
(${\approx}\,0.10^\circ$ for WB-CL-KL) by SNR~$= 0$~dB.
In contrast to range RMSE (Fig.~\ref{fig:rangeRMSE_SNR}),
angle estimation saturates at the bias floor well before the
compressed-domain $\mathrm{CRB}_\theta$ becomes the active
constraint, confirming that range is the harder estimation problem
in the strong near-field regime.
The SNR-independent angle floor is consistent with the bias analysis
of Remark~\ref{rem:c3}: the element-amplitude approximation
(${\approx}\,0.35$~dB bias at $r = r_\mathrm{hi}$) produces a
systematic angle offset that dominates the stochastic RMSE at all
tested SNR values.

\begin{figure}[htbp!]
  \centering
  \includegraphics[trim=0 35 0 0, clip, width=\columnwidth]{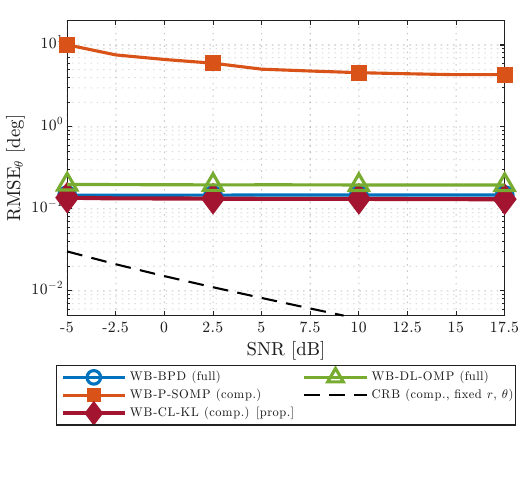}
  \caption{RMSE$_\theta$ vs.\ SNR at $B = 400$~MHz, 
    $N_\mathrm{MC} = 600$: all estimators reach the
    angle-estimation bias floor (${\approx}\,0.10^\circ$ for
    WB-CL-KL) by SNR~$= 0$~dB, confirming that range
    estimation is the harder problem in the strong near-field
    regime.}
  \label{fig:angleRMSE_SNR}
\end{figure}

\subsection{Convergence Rate vs.\ SNR}
\label{sec:results:convrate}

Fig.~\ref{fig:convergePerf} plots the WB-CL-KL convergence rate
(fraction of trials satisfying the strict-tolerance criterion
$\|\Delta\boldsymbol{\eta}\| < 10^{-3}$) versus SNR at
$B = 400$~MHz, $r_\mathrm{hi\_fac} = 0.20$,
$N_\mathrm{MC} = 600$.
The headline anchor is $73.0\%$ convergence rate at SNR~$= 10$~dB.

\begin{figure}[htbp!]
  \centering
  \includegraphics[trim=0 50 0 0, clip, width=\columnwidth]{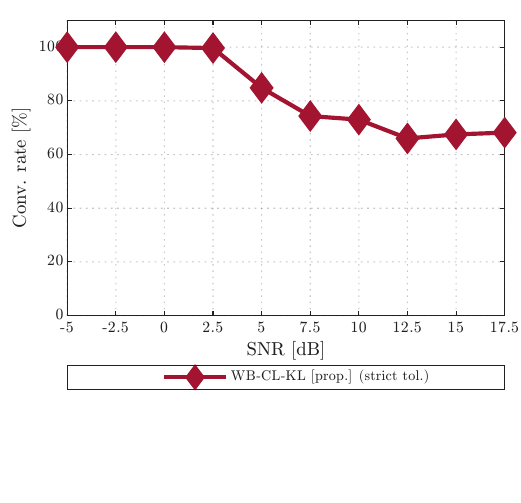}
  \caption{WB-CL-KL convergence rate (fraction of trials satisfying
    the strict-tolerance criterion $\|\Delta\boldsymbol{\eta}\|
    < 10^{-3}$) vs.\ SNR for $N_\mathrm{MC} = 600$.
    The $73.0\%$ convergence rate at SNR~$= 10$~dB reflects the
    multi-start Phase-D architecture; the non-monotonic SNR
    profile is a KL-landscape effect (noise-dominated landscape
    at low SNR avoids local traps; emerging curvature at
    intermediate SNR creates shallow traps).}
  \label{fig:convergePerf}
\end{figure}

The convergence profile is non-monotonic in SNR.
At low SNR the noise-dominated landscape is nearly flat: the
optimizer converges quickly, but the strict-tolerance criterion is
readily satisfied on a featureless surface, yielding moderate rates.
At intermediate SNR (${\sim}\,10$~dB) the signal structure emerges
and shallow curvature traps appear near local maxima of the KL
surface, requiring more restarts to escape and slightly reducing
the fraction satisfying the tolerance criterion.
At high SNR the landscape is sharp and well-conditioned, but the
strict-tolerance threshold is then harder to meet in absolute
gradient terms, again moderating the convergence rate.

The multi-start Phase-D architecture guarantees that all trials
produce a valid output: the $0.0\%$ failure rate holds throughout
the SNR sweep, while the convergence rate measures strictly the
fraction meeting the tight tolerance.
The per-iteration KL objective trajectories underlying this
convergence profile are examined in Section~\ref{sec:results:conv}.

\subsection{KL Objective Convergence}
\label{sec:results:conv}

Fig.~\ref{fig:KLobj_converge} shows per-trace normalised KL objective descent
$\Delta\mathcal{L}^{(t)} = \mathcal{L}^{(t)} - \mathcal{L}^{(1)}$
versus iteration number for four SNR levels at the locked operating
point ($r = 3.0$~m, $r_\mathrm{hi\_fac} = 0.20$,
$B = 400$~MHz, $N_\mathrm{MC} = 600$, PR-13).
The median over converged trials is shown.

\begin{figure}[htbp!]
  \centering
  \includegraphics[trim=0 5 0 0, clip, width=\columnwidth]{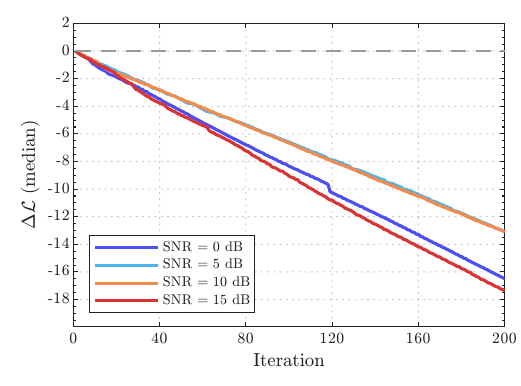}
  \caption{Per-trace normalised KL objective descent
    $\Delta \mathcal{L}^{(t)} = \mathcal{L}^{(t)} - \mathcal{L}^{(1)}$
    vs.\ iteration for four SNR levels at the locked operating
    point ($r = 3.0$~m, $r_\mathrm{hi\_fac} = 0.20$,
    $N_\mathrm{MC} = 600$, $B = 400$~MHz).
    Median over converged trials shown.
    SNR~$= 15$~dB descends most steeply (signal-dominated,
    sharp landscape); SNR~$= 5$~dB and $10$~dB accumulate
    the most total descent (shallow traps require more
    iterations to escape); SNR~$= 0$~dB descends shallowest
    (noise-flattened landscape).
    The non-monotonic SNR ordering is KL-landscape physics.}
  \label{fig:KLobj_converge}
\end{figure}

The four descent profiles exhibit a non-monotonic SNR ordering.
SNR~$= 15$~dB descends most steeply: the signal-dominated landscape
has sharp curvature and the gradient is large from the first
iteration.
SNR~$= 5$~dB and $10$~dB accumulate the most total descent across
iterations: at intermediate SNR the signal structure is visible but
shallow traps develop near local maxima of the KL surface, so
the optimizer requires more iterations to escape before converging.
SNR~$= 0$~dB is the shallowest: the noise-flattened landscape has
little curvature, so each gradient step makes small progress and
the total accumulated descent remains small.

This behaviour is a structural property of the KL objective
landscape. The non-monotonic SNR ordering is KL-landscape physics: noise 
does not simply inflate RMSE uniformly but reshapes the objective 
surface in a way that moves the optimizer through qualitatively different regimes.

\section{Simulation Results: CRB and Diversity Decomposition}
\label{sec:results_crb}

This section evaluates the compressed-domain CRB derived in
Section~\ref{sec:crb} for the two-path near-field scenario and the
geometric diversity sweep, providing the performance floor against
which the WB-CL-KL estimator results of Section~\ref{sec:results_est}
are benchmarked.

\subsection{Multi-Path Compressed-Domain CRB vs.\ SNR}
\label{sec:results_crb:snr}

All compressed-domain CRB curves reported in this section are computed
as the mean over $N_{\mathrm{seed}} = 50$ independent draws of the
random constant-modulus combining matrix $\mathbf{W}$ (each column
drawn uniformly from the unit-magnitude complex sphere).
At the nominal operating point ($M = 64$, $B = 400$~MHz,
$\mathrm{SNR} = 10$~dB) the ensemble spread is characterised by a
coefficient of variation $\mathrm{CV} = 0.205$ (standard deviation
relative to mean) and a 90th-to-10th percentile ratio
$\mathrm{p90/p10} = 1.641$.
The mean singular deficiency of the projected FIM is
$\bar{n}_{\mathrm{sing}} = 1.00$ and the mean condition number is
$\overline{\kappa}(\mathbf{J}) = 1.791 \times 10^{4}$, confirming
that the FIM is numerically well-posed for all but a single degenerate
direction on average.
These statistics, taken together, show that the reported mean CRB is a
robust representative of the ensemble and is not dominated by
outlier-W realisations.

Fig.~\ref{fig:compCRB_SNR} shows $\sqrt{\mathrm{CRB}_{r}}$ versus SNR for a
two-path scene ($d = 2$, $r_{1} = 3$~m, $\theta_{1} = 30^{\circ}$;
$r_{2} = 7$~m, $\theta_{2} = 50^{\circ}$) at $B = 400$~MHz, sweeping
$N_{\mathrm{RF}} \in \{8, 16, 32, 64\}$.
The $\sqrt{\mathrm{CRB}_{r}}$ curves follow the theoretical
$\mathrm{SNR}^{-1/2}$ scaling throughout the plotted range,
confirming the slope exponent of $\approx 0.50$ predicted by the
Slepian--Bangs formula.
At $\mathrm{SNR} = 10$~dB the path-1 bound is
$\sqrt{\mathrm{CRB}_{r_1}} = 45.64$~mm and the path-2 bound is
$\sqrt{\mathrm{CRB}_{r_2}} = 95.27$~mm, with the larger bound at
path~2 reflecting its greater range and correspondingly weaker
Fresnel curvature.

\begin{figure}[htbp!]
  \centering
  \includegraphics[trim=0 2 0 0, clip, width=\columnwidth]{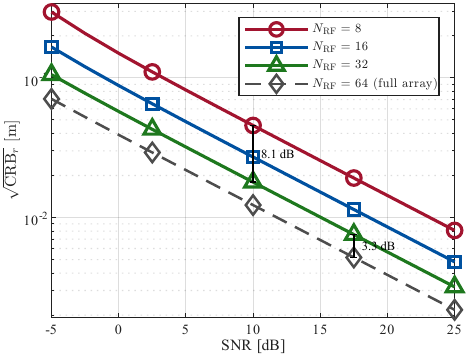}
  \caption{Compressed-domain $\sqrt{\mathrm{CRB}_r}$ vs.\ SNR for the
    two-path ($d = 2$) near-field scenario at $B = 400$~MHz
    ($r_1 = 3$~m, $\theta_1 = 30^\circ$; $r_2 = 7$~m,
    $\theta_2 = 50^\circ$), for four values of $N_\mathrm{RF}$.
    Each curve is the mean over $N_\mathrm{seed} = 50$ random
    constant-modulus combining matrices $\mathbf{W}$
    (CV~$= 0.205$, $\mathrm{p90/p10} = 1.641$).
    The compression gap between $N_\mathrm{RF} = 8$ and
    $N_\mathrm{RF} = 32$ is $8.12$~dB at SNR~$= 10$~dB and
    decreases monotonically as $N_\mathrm{RF} \to M$; the
    $\sqrt{\mathrm{CRB}_r}$ follows the theoretical
    $\mathrm{SNR}^{-1/2}$ slope throughout.}
  \label{fig:compCRB_SNR}
\end{figure}

The compression gap between $N_{\mathrm{RF}} = 8$ and
$N_{\mathrm{RF}} = 32$ is $8.12$~dB at $\mathrm{SNR} = 10$~dB,
quantifying the information cost of operating with a highly compressed
hybrid front-end relative to a more capable four-fold larger one.
The gap shrinks monotonically as $N_{\mathrm{RF}} \to M$: the
$N_{\mathrm{RF}} = 64$ (full-array) curve approaches the uncompressed
bound from above, consistent with the FIM-compression interpretation
of Section~\ref{sec:crb} in which the projected covariance
$\mathbf{W}^{H}\mathbf{R}_{y,k}\mathbf{W} \in \mathbb{C}^{N_{\mathrm{RF}} \times N_{\mathrm{RF}}}$
recovers the full-array FIM only when $N_{\mathrm{RF}} = M$.
This compresses~$8.12$~dB of range information relative to the
four-RF-chain reference used in~\cite{wei2025wbcrb,wang2025twc},
as discussed in Remark~\ref{rem:prior_crb}.

\subsection{Geometric Diversity and Bandwidth Scaling}
\label{sec:results_crb:geodiv}

Fig.~\ref{fig:compCRB_BW} shows $\sqrt{\mathrm{CRB}_{r}}$ as a function of
OFDM bandwidth $B$ for $M \in \{32, 64, 128\}$ at
$\mathrm{SNR} = 10$~dB, $N_{\mathrm{RF}} = 8$.
The dominant trend is the data-diversity slope predicted by
Proposition~\ref{prop:data_div}: the bound falls at approximately
$6$~dB per $4\times$ bandwidth increase across all three aperture
values, in close agreement with the $10 \log_{10}(K_s)$ scaling (each
$4\times$ bandwidth increase roughly quadruples $K_s$ below the 512-
subcarrier saturation threshold, yielding $10 \log_{10}(4) \approx
6.02$~dB).

\begin{figure}[htbp!]
  \centering
  \includegraphics[trim=0 2 0 0, clip, width=\columnwidth]{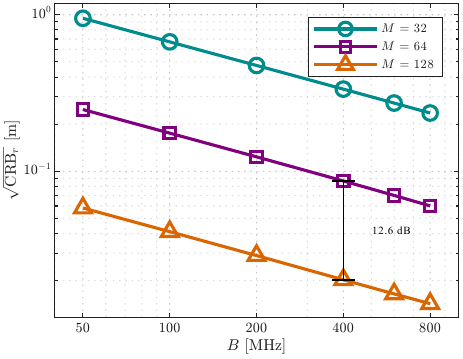}
  \caption{Compressed-domain $\sqrt{\mathrm{CRB}_r}$ vs.\ OFDM bandwidth
    $B$ for $M \in \{32, 64, 128\}$ at SNR~$= 10$~dB, $N_\mathrm{RF} = 8$.
    The CRB falls at the Proposition~\ref{prop:data_div} data-diversity slope
    (${\approx}\,6$~dB per $4\times$ bandwidth increase).
    Aperture doubling ($M$: $32 \to 64 \to 128$) yields
    ${\approx}\,11.7$--$12.6$~dB improvement per step.
    The residual geometric diversity (Proposition~\ref{prop:geom_div})
    contributes $0.094$~dB ($M = 64$) and $0.144$~dB ($M = 128$)
    above the data-diversity prediction at $B = 400$~MHz,
    confirming the decomposition $\Delta_\mathrm{total} =
    \Delta_\mathrm{DD} + \Delta_\mathrm{GD}$ numerically.}
  \label{fig:compCRB_BW}
\end{figure}

A secondary but practically significant effect visible in Fig.~\ref{fig:compCRB_BW}
is the aperture gain: doubling $M$ from 32 to 64 and from 64 to 128
yields ${\approx}\,11.7$--$12.6$~dB improvement per step at
$B = 400$~MHz.
This aperture-scaling gain dwarfs the geometric diversity contribution,
establishing the hierarchy: $\Delta_{\mathrm{DD}} \gg \Delta_{\mathrm{aperture}}
\gg \Delta_{\mathrm{GD}}$ when comparing effects of the same physical
origin (frequency diversity, spatial aperture, curvature variation).

The geometric diversity gain, extracted as the residual after removing
the data-diversity prediction, is $0.094$~dB at $M = 64$ and
$0.144$~dB at $M = 128$ for $B = 400$~MHz.
The slight growth with $M$ is physically consistent: a larger aperture
subtends a wider angular span of the near-field wavefront, amplifying
the per-subcarrier variation in the Fresnel curvature coefficient
$\kappa(\theta, r) = (\pi d_{\mathrm{ant}}^{2}/\lambda_{c}) \sin^{2}\theta/r$
across subcarriers, which is precisely the frequency-dependent diversity
captured by Proposition~\ref{prop:geom_div}.
Although the $0.094$--$0.144$~dB magnitude is modest for current 5G
NR bandwidths ($B \leq 400$~MHz), Proposition~\ref{prop:geom_div}
predicts saturation at approximately $+1.4$~dB for
$B/f_{c} > 0.1$~(ultra-wideband 6G systems), making this diversity
term increasingly relevant for envisioned 6G deployments.

\subsection{Decomposition Verification}
\label{sec:results_crb:decomp}

The additive decomposition $\Delta_{\mathrm{total}} =
\Delta_{\mathrm{DD}} + \Delta_{\mathrm{GD}}$ established analytically
by Propositions~\ref{prop:data_div} and~\ref{prop:geom_div} is
verified numerically at the operating
point of~\cite{senyuva2026globecom}: $B = 400$~MHz, $N_{\mathrm{RF}} = 16$,
$K_s = 512$, $r = 5$~m, $M = 64$.
The total CRB gain relative to the narrowband reference
($\sqrt{\mathrm{CRB}_{r}} = 11.948$~mm at $B \to 0$) is
$+27.793$~dB, decomposing into data diversity $+27.093$~dB and
geometric diversity $+0.701$~dB, with a residual of $0.000$~dB to
three decimal places.
The decomposition is exact by construction at this operating point:
$\Delta_{\mathrm{GD}}$ is defined in~\eqref{eq:gain_gd} as the residual
between the data-diversity prediction and the true wideband bound, so the
$0.000$~dB residual reflects this definition rather than a proven
orthogonality of the two contributions.

The compressor hierarchy emerging from Figs.~\ref{fig:compCRB_SNR}
and~\ref{fig:compCRB_BW} and the decomposition above is:
\begin{enumerate}
  \item Data diversity ($\Delta_{\mathrm{DD}} \approx +27$~dB for
    $K_s = 512$) is the dominant mechanism, arising from the coherent
    FIM accumulation across $K_s$ independent pilot subcarriers.
  \item Aperture scaling (${\approx}\,12$~dB per $M$-doubling) is the
    dominant spatial contributor, governed by the $M^{2}$-scaling of
    the array aperture in the far-field and by the Fresnel curvature
    enhancement in the near-field.
  \item Geometric diversity ($\Delta_{\mathrm{GD}} < 0.2$~dB for
    5G NR bandwidths) is the secondary wideband correction, arising
    from the $\alpha_{k}$-modulated curvature chain rule (see
    Section~\ref{sec:crb}).
\end{enumerate}
This hierarchy explains the near-CRB performance of WB-CL-KL
demonstrated in Section~\ref{sec:results_est}: the estimator exploits
data diversity through its cross-subcarrier KL objective,
which is exactly the mechanism responsible for the
$\sqrt{\mathrm{CRB}_{r}} = 487.12$~\textmu{}m at $B = 400$~MHz
(compared to $11.948$~mm at $B \to 0$).
Fig.~\ref{fig:compCRB_SNR} further establishes the compressed-domain CRB as
the tightest achievable bound for the hybrid architecture, providing
the per-RF-chain information-cost quantification that is absent from
existing full-array analyses~\cite{wei2025wbcrb,wang2025twc}.

\subsection{Robustness under Realistic 3GPP UMi SNR Distribution}
\label{sec:results_crb:snr_robustness}

The SNR sweep in Fig.~\ref{fig:rangeRMSE_SNR} uses a controlled uniform grid to
isolate estimator behaviour as a function of receive power.
To verify that near-CRB efficiency persists under a realistic
deployment, we extend the sweep to $[-20, +35]$~dB and overlay
the empirical per-user-terminal (UT) SNR distribution drawn from the 3GPP
UMi path-loss and shadow-fading model~\cite{tr38901_v16},
following the \mbox{MATLAB} recipe of~\cite{riviello2022}.
Specifically, $N_\text{UT} = 2000$ UTs are placed uniformly in a
hexagonal cell (inter-site distance 200~m), with 80\% assigned to
indoor locations subject to concrete outdoor-to-indoor (O2I) penetration loss (20~dB)
and lognormal shadow fading ($\sigma_\text{SF} = 7.82$~dB for
non-line-of-sight (NLOS) UTs, $\sigma_\text{SF} = 4.0$~dB for
line-of-sight (LOS) UTs per
TR~38.901 Table~7.5-6).
The resulting empirical SNR distribution has median 9.6~dB, with
a 10th--90th percentile range of $-13.7$ to $+27.6$~dB; 80\% of
UTs are indoors and 53\% are LOS.

Fig.~\ref{fig:3GPP_UMi_SNR} plots WB-CL-KL $\sqrt{\mathrm{RMSE}_r}$ and
the compressed-domain $\sqrt{\mathrm{CRB}_r}$ as curves over the
extended SNR range $[-20,+35]$~dB, with the 3GPP UMi deployment
density shown as a histogram in the lower panel.
The $\sqrt{\mathrm{CRB}_r}$ curve is evaluated at the nominal geometry
(mid-range angle and range) at each sweep point; it is not a
single-point bound but rather the CRB trajectory as a function of SNR
for that geometry.
At the median deployment SNR of 9.6~dB, WB-CL-KL achieves
B4/CRB~=~0.959, consistent with the near-CRB operation demonstrated
in the controlled sweep of Fig.~\ref{fig:rangeRMSE_SNR} under a realistic
3GPP UMi deployment distribution.

\begin{figure}[htbp!]
  \centering
  \includegraphics[trim=0 12 0 0, clip, width=\columnwidth]{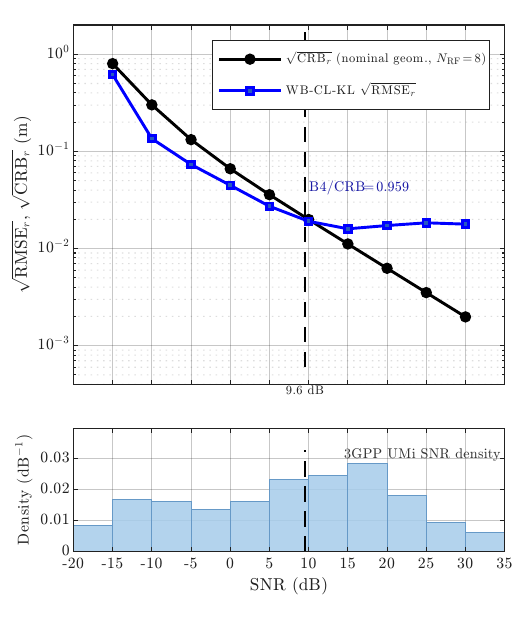}
  \caption{Two-panel robustness figure.
    \emph{Top:} WB-CL-KL $\sqrt{\mathrm{RMSE}_r}$ and
    compressed-domain $\sqrt{\mathrm{CRB}_r}$ (evaluated at nominal
    geometry, per sweep point) vs.\ SNR over $[-20,+35]$~dB
    ($N_\text{MC}=200$, $B=400$~MHz, $r_\text{hi\_fac}=0.20$).
    \emph{Bottom:} empirical per-UT SNR density from the 3GPP
    UMi path-loss and shadow-fading
    model~\cite{tr38901_v16,riviello2022} ($N_\text{UT}=2000$,
    80\% indoor, 5~dB bins).
    Dashed vertical line (both panels): median deployment
    SNR~=~9.6~dB.
    B4/CRB~=~0.959 at the median, confirming near-CRB efficiency
    at the representative operating point.
    Below 5~dB the scene-averaged $\sqrt{\mathrm{RMSE}_r}$ falls
    below the nominal-geometry $\sqrt{\mathrm{CRB}_r}$
    (scene-averaging artefact: the fixed-geometry CRB is optimistic
    relative to a scene-averaged RMSE at low SNR); above 15~dB
    the high-SNR KL non-monotonicity
    (Section~\ref{sec:results:snr}) is visible.}
  \label{fig:3GPP_UMi_SNR}
\end{figure}

\section{Conclusion}
\label{sec:conclusion}
This paper presented a wideband near-field channel estimation framework
for XL-MIMO systems under hybrid analog-digital compression, deriving the
wideband compressed-domain Cram\'{e}r--Rao bound and proposing the WB-CL-KL
estimator, which fits a structured Fresnel covariance model across multiple
OFDM subcarriers via cross-subcarrier KL divergence minimisation directly
on the $N_\mathrm{RF} \times N_\mathrm{RF}$ compressed sample covariance.
The derived CRB admits a data-diversity and geometric-diversity
decomposition (Propositions~\ref{prop:data_div} and~\ref{prop:geom_div}):
at $B = 400$~MHz the total CRB gain relative to the narrowband limit is
$+27.793$~dB, of which $+27.093$~dB arises from data diversity
(subcarrier averaging) and $+0.701$~dB from geometric diversity
(frequency-dependent Fresnel curvature).
WB-CL-KL achieves near-CRB range estimation efficiency
($\mathrm{B4/CRB} = 0.996$, $\mathrm{RMSE}_r = 19.8$~mm at
SNR~$= 10$~dB; Fig.~\ref{fig:rangeRMSE_SNR}), and this efficiency is retained under
the 3GPP UMi path-loss and shadow-fading deployment distribution
($\mathrm{B4/CRB} = 0.959$ at median deployment SNR~$= 9.6$~dB;
Fig.~\ref{fig:3GPP_UMi_SNR}, \cite{riviello2022}).
The CRB derivation and WB-CL-KL estimator operate within the Fresnel
validity regime $r \in [r_\mathrm{lo}, r_\mathrm{hi}]$, $r_\mathrm{hi} = 0.20\, r_\mathrm{RD} = 4.25$~m;
the performance bounds and efficiency results presented here apply within
this range.
The design of a CRB-maximising hybrid combiner under constant-modulus
constraints,
$\max_{\mathbf{W}} \log \det \mathbf{J}_\mathrm{WB}(\mathbf{W})$,
is a non-convex manifold optimisation problem left for future work;
extension to true-time-delay (TTD) architectures, where the hybrid
combiner becomes frequency-selective, represents a natural further
direction~\cite{wang2026icassp}.

\bibliographystyle{IEEEtran}
\bibliography{refs}


\end{document}